\DeclareSIUnit{\belmilliwatt}{Bm}
\DeclareSIUnit{\dBm}{\deci\belmilliwatt}
\DeclareSIUnit{\dBi}{dBi}
\def\BibTeX{{\rm B\kern-.05em{\sc i\kern-.025em b}\kern-.08em
		T\kern-.1667em\lower.7ex\hbox{E}\kern-.125emX}}
\newif\iftag@here
\newcommand*{\taghere}[1][0pt]
{\ifmeasuring@\else
	\global\tag@heretrue
	\tikz[remember picture,overlay]{\coordinate (taghere) at (0pt,#1);}%
	\fi}
\def\place@tag{%
	\iftagsleft@
	\kern-\tagshift@
	\iftag@here
	\global\tag@herefalse
	\tikz[remember picture,overlay]%
	{\path (taghere) -| node[anchor=base]{\rlap{\boxz@}} (0pt,0pt);}%
	\else
	\if1\shift@tag\row@\relax
	\rlap{\vbox{%
			\normalbaselines
			\boxz@
			\vbox to\lineht@{}%
			\raise@tag
	}}%
	\else
	\rlap{\boxz@}%
	\fi
	\kern\displaywidth@
	\fi
	\else
	\kern-\tagshift@
	\iftag@here
	\global\tag@herefalse
	\tikz[remember picture,overlay]%
	{\path  (taghere) -|  node[anchor=base]{\llap{\boxz@}} (0pt,0pt);}%
	\else
	\if1\shift@tag\row@\relax
	\llap{\vtop{%
			\raise@tag
			\normalbaselines
			\setbox\@ne\null
			\dp\@ne\lineht@
			\box\@ne
			\boxz@
	}}%
	\else \llap{\boxz@}%
	\fi
	\fi
	\fi
}
\DeclareMathOperator*{\maximize}{maximize}
\DeclareMathOperator*{\subjectto}{subject\,to}
\newacronym{swipt}{SWIPT}{simultaneous wireless information and power transfer}
\newacronym{wpt}{WPT}{wireless power transfer}
\newacronym{wpcn}{WPCN}{wireless powered communication network}
\newacronym{wit}{WIT}{wireless information transfer}
\newacronym{awgn}{AWGN}{additive white Gaussian noise}
\newacronym{tx}{TX}{transmitter}
\newacronym{rx}{RX}{receiver}
\newacronym{bs}{BS}{base station}
\newacronym{ir}{IR}{information receiver}
\newacronym{eh}{EH}{energy harvesting}
\newacronym{id}{ID}{information detection}
\newacronym{irs}{IRS}{intelligent reflected surface}
\newacronym{ap}{AP}{average power}
\newacronym{pp}{PP}{peak power}
\newacronym{aoa}{AOA}{angle-of-arrival}
\newacronym{aod}{AOD}{angle-of-departure}
\newacronym{eec}{EEC}{electical equivalent circuit}
\newacronym{fso}{FSO}{free space optical}
\newacronym{vlc}{VLC}{visible light communication}
\newacronym{led}{LED}{light emitting diode}
\newacronym{ook}{OOK}{On-Off keying}
\newacronym{ask}{ASK}{amplitude shift keying}
\newacronym{pam}{PAM}{pulse amplitude modulated}
\newacronym{tdma}{TDMA}{time division multiple access}
\newacronym{ml}{ML}{maximum likelihood}
\newacronym{siso}{SISO}{single-input single-output}
\newacronym{mimo}{MIMO}{multiple-input multiple-output}
\newacronym{miso}{MISO}{multiple-input single-output}
\newacronym{simo}{SIMO}{single-input multiple-output}
\newacronym{rf}{RF}{radio frequency}
\newacronym{ghz}{GHz}{gigahertz}
\newacronym{los}{LoS}{line-of-sight}
\newacronym{dc}{DC}{direct current}
\newacronym{ac}{AC}{alternating current}
\newacronym{papr}{PAPR}{peak-to-average power ratio}
\newacronym{lp}{LPF}{low-pass filter}
\newacronym{mc}{MC}{matching circuit}
\newacronym{rtd}{RTD}{resonant-tunelling diode}
\newacronym{mrt}{MRT}{maximum ratio transmission}
\newacronym{ecb}{ECB}{equivalent complex baseband}
\newacronym{tdd}{TDD}{time-division-duplex}
\newacronym{zf}{ZF}{zero forcing}
\newacronym{snr}{SNR}{signal-to-noise ratio}
\newacronym{sinr}{SINR}{signal-to-interference-plus-noise ratio}
\newacronym{rv}{RV}{random variable}
\newacronym{iid}{i.i.d.}{independent and identically distributed}
\newacronym{Pdf}{Pdf}{Probability density function}
\newacronym{pdf}{pdf}{probability density function}
\newacronym{cdf}{cdf}{cumulative density function}
\newacronym{dnn}{DNN}{dense neural networks}
\newacronym{mdp}{MDP}{Markov decision process}
\newacronym{sca}{SCA}{successive convex approximation}
\newacronym{sdr}{SDR}{semi-definite relaxation}
\newacronym{spr}{LP}{low power}
\newacronym{mpr}{MP}{medium power}
\newacronym{lpr}{HP}{high power}
\begin{document}

	\newtheorem{proposition}{Proposition}	
	\newtheorem{lemma}{Lemma}	
	\newtheorem{corollary}{Corollary}
	\newtheorem{assumption}{Assumption}	
	\newtheorem{remark}{Remark}	
	
	\title{Information Rate-Harvested Power Tradeoff in THz SWIPT Systems Employing Resonant Tunnelling Diode-based EH Circuits}

	\author{
		Nikita Shanin, 
		Simone Clochiatti, 
		Kenneth M. Mayer, 
		Laura Cottatellucci, 
		Nils Weimann, 
		and Robert Schober
 }

	\maketitle

\begin{abstract}
	\let\thefootnote\relax\footnotetext{This paper was presented in part at the IEEE Globecom Workshops, 2023 \cite{Shanin2023b}.}
\let\thefootnote\relax\footnotetext{This work was supported in part by the German Science Foundation through project SFB 1483 - Project-ID 442419336, EmpkinS.}
\setcounter{footnote}{0}
In this paper, we study terahertz (THz) simultaneous wireless information and power transfer (SWIPT) systems.
{Since coherent information detection is challenging at THz frequencies and Schottky diodes may not be efficient for THz energy harvesting (EH), we propose a novel THz SWIPT system design that employs unipolar amplitude shift keying (ASK) modulation at the transmitter (TX) and a resonant-tunnelling diode (RTD)-based EH circuit at the receiver (RX) to extract both information and power from the received signal. 
Furthermore, we propose a novel model for the dependence of the instantaneous output power of the RTD-based RX on the instantaneous received power, which is based on a non-linear and non-monotonic piecewise function, whose parameters are adjusted to fit circuit simulation results.}
To determine the information rate-harvested power tradeoff of the considered THz SWIPT system, we derive the distribution of the transmit signal that maximizes the \emph{mutual information} between the transmit and received signals subject to constraints on the required average harvested power at the RX and the peak signal amplitude at the TX.
{Since the computational complexity needed for maximization of the {mutual information} may be infeasible for real-time THz SWIPT systems, we derive low-complexity suboptimal input signal distributions that maximize an \emph{achievable information rate} numerically and in closed form for high and low required average harvested powers, respectively.}
Furthermore, based on the obtained results, we propose a suboptimal closed-form distribution of the transmit signal which can also guarantee a desired harvested power at the RX.
{Our simulation results show that while the proposed EH model can capture the non-monotonicity of RTD-based EH circuits in the THz band, baseline linear and non-linear EH models, developed for Schottky-diode-based EH circuits, can not.}
Furthermore, we demonstrate that a lower reverse current flow and a higher breakdown voltage of the employed RTD are preferable when the input signal power at the RX is low and high, respectively.
We also show that all proposed input distributions yield practically identical SWIPT system performance.
Moreover, we reveal that the information rate-harvested power tradeoff of THz SWIPT systems is determined by the peak amplitude of the TX signal and the maximum instantaneous harvested power for low and high received signal powers, respectively.
Finally, we compare the proposed THz SWIPT system with two baseline schemes and confirm that the RX circuit parameters, mathematical EH models, and optimal transmit signal distributions have to be carefully designed to achieve high performance in THz SWIPT systems.
\begin{IEEEkeywords}
	Terahertz communications, non-linear energy harvesting, resonant tunnelling devices, signal design.
\end{IEEEkeywords}
\end{abstract}

\section{Introduction}
\label{Section:Introduction}
{Micro-scale terahertz (THz) Internet-of-Things (IoT) networks are an important use case of future sixth-generation (6G) communication systems \cite{Nguyen2022, Tataria2021}.
In these networks, microscopic low-power IoT devices communicate at extremely high data rates exceeding $\SI{100}{\giga\bit\per\second}$, which are enabled by the huge spectrum available in the THz frequency band \cite{Tataria2021, Kuscu2021, Akyildiz2015}.
Such micro-scale IoT devices, which are also referred to as Internet-of-Nanothings and Internet-of-BioNanothings machines, can be utilized, for example, in healthcare applications for the continuous tracking of patient’s vital signs, in-body cell imaging, or nano-scale surgery \cite{Kuscu2021, Akyildiz2015, Pramanik2020}.
Despite the high expectations for these micro- and nanoscopic IoT devices, the need to regularly replace their batteries is an unsolved problem that presents an obstacle for their practical realization \cite{Sarieddeen2021, Lu2011, Shinohara2021}.
A promising solution to this non-trivial problem is \gls*{swipt}, where both information and power are transmitted in the downlink making battery replacement at the user devices unnecessary \cite{Varshney2008, Zhang2013, Rong2017, Pan2022, Kim2022, Clerckx2019, Shi2017, Pejoski2018, Boshkovska2015, Boshkovska2018, Zhu2021, Zhu2022a, Boshkovska2017a, Xu2022, Zhu2022, Morsi2019, Hanif2023, Shanin2020}.}

The theoretical limits of SWIPT were first analyzed in \cite{Varshney2008}, where the author showed that, for discrete-time SWIPT systems, there exists a fundamental tradeoff between the information rate and the power harvested at a user device.
This tradeoff can be characterized by an information rate-harvested power region, whose boundary is determined by the input signal distributions maximizing the information rate for given required average harvested powers at the user device.
The information rate-harvested power tradeoff of practical SWIPT systems was studied in \cite{Zhang2013}, where the authors considered time-sharing and high-frequency power-splitting between \gls*{eh} and information detection at the \gls*{rx}.
{Furthermore, since the number of scatterers in THz wireless channels is limited \cite{Tataria2021} and a \gls*{los} between \gls*{tx} and \gls*{rx} may not always be available \cite{Sarieddeen2021}, the authors in \cite{Rong2017} and \cite{Pan2022} investigated optimal resource allocation for THz SWIPT systems employing a relay node and intelligent reflecting surfaces (IRSs) to bypass blockages, respectively.
}

For the design of SWIPT systems, the authors  in \cite{Varshney2008, Zhang2013, Rong2017, Pan2022} assumed a linear relationship between the received and the harvested powers.
However, the experimental results in \cite{Kim2022} revealed that practical electrical circuits utilized for \gls*{eh} exhibit a highly non-linear behaviour.
In fact, due to the non-linear current-voltage (I-V) characteristic of EH rectifying diodes, the power harvested by EH circuits can not be accurately characterized by a linear function of the received signal power \cite{Kim2022, Clerckx2019}.
Furthermore, when the input power at the EH device is high, the EH circuit is driven into saturation due to the breakdown of the employed diode \cite{Clerckx2019}.
{To take the non-linear behaviour of EH circuits into account, the authors of \cite{Shi2017, Pejoski2018, Boshkovska2015, Boshkovska2018, Zhu2021, Zhu2022a, Boshkovska2017a, Xu2022, Zhu2022, Morsi2019, Hanif2023, Shanin2020} derived non-linear EH models for the design of EH-based communication networks.
In particular, the authors in \cite{Shi2017} and \cite{Pejoski2018} proposed EH models defined by piecewise linear functions to describe the dependence between the \textit{average} harvested power and the \textit{average} received power of an EH node.
Furthermore, since the input-output characteristic of practical electrical circuits is typically continuous and smooth \cite{Tietze2012, Horowitz1989}, the authors of \cite{Boshkovska2015} proposed an EH model based on a parametric sigmoidal function to characterize the average harvested power at the EH circuit.
The parameters of the EH model in \cite{Boshkovska2015} were adjusted to fit circuit simulation results assuming a Gaussian waveform for the transmit signal.
The sigmoidal EH model in \cite{Boshkovska2015} is widely utilized for the design of EH-based communication networks, comprising also, e.g., eavesdroppers \cite{Boshkovska2018, Zhu2021, Zhu2022a} and relays \cite{Boshkovska2017a}.
In particular, the authors in \cite{Zhu2021} and \cite{Zhu2022a} designed secure IoT SWIPT networks that are robust to imperfect channel knowledge and operate in the mmWave frequency band, respectively.
Moreover, similar to \cite{Pan2022}, the authors of \cite{Xu2022} and \cite{Zhu2022} utilized the EH model in \cite{Boshkovska2015} to investigate IRS-aided gigahertz (GHz)-band and THz-band SWIPT systems, respectively.}

{We note that for the EH models proposed in \cite{Shi2017, Pejoski2018, Boshkovska2015, Boshkovska2018, Zhu2021, Zhu2022a, Boshkovska2017a, Xu2022, Zhu2022}, a \gls*{tx} signal with fixed and known waveform was assumed. 
To optimize the \gls*{tx} waveform for SWIPT, the authors of \cite{Morsi2019} and \cite{Hanif2023} analyzed EH circuits based on Schottky diodes and derived closed-form EH models to characterize the \textit{instantaneous} power harvested at the user device as a function of the \textit{instantaneous} received signal power.
Exploiting the circuit-based EH model, the authors of \cite{Morsi2019} studied SWIPT systems with separate energy and information RXs, and similar to \cite{Varshney2008}, determined the optimal transmit signal distribution that maximizes the mutual information between the TX signal and the signal received at the information RX under a constraint on the average harvested power at the EH RX.
Furthermore, the authors of \cite{Hanif2023} derived an EH model for THz-band imaging systems employing SWIPT assuming On-Off Keying modulation at the TX.}
Finally, since practical IoT devices may require both information and power, the authors of \cite{Shanin2020} determined the optimal input distribution for SWIPT systems, where the EH RX and the information RX were collocated in the same device.
In contrast to \cite{Zhang2013}, the information and energy RXs in \cite{Shanin2020} were connected to different antennas, and thus, could operate simultaneously without splitting the high-frequency received signal.

{The EH models in \cite{Morsi2019, Hanif2023, Shanin2020} characterize the instantaneous harvested power of EH circuits equipped with Schottky diodes.
However, Schottky diodes are not efficient at THz frequencies \cite{Shinohara2021}.
First, the current-voltage (I-V) characteristic of Schottky diodes has relatively low curvature at the zero-bias operating point \cite{Clerckx2019, Tietze2012}.
This property limits the efficiency of Schottky diodes at THz frequencies due to the expected low received signal powers \cite{Sarieddeen2021, Serghiou2022, Priebe2011}.
Furthermore, Schottky diodes typically have large reverse recovery times of around $\SI{100}{\pico\second}$ which limits their ability to efficiently rectify THz signals.
In contrast, resonant tunnelling diodes (RTDs) featuring a potential well and multiple quantum barriers show high efficiency also for low received signal powers and have recovery times of less than $\SI{1}{\pico\second}$, which allows them to efficiently operate at THz frequencies \cite{Clochiatti2022}.}
RTDs typically have a smaller form factor and a larger I-V curvature compared to Schottky diodes, and thus, can lend themselves to integration with THz on-chip antennas and application in microscopic IoT devices \cite{Villani2021}.
The experiments in \cite{Clochiatti2022} revealed that due to an added tunnelling current in a narrow biasing window, the I-V forward-bias characteristic of RTDs is not only highly non-linear, but also exhibits multiple critical points, a region of negative resistance, and a dependence on the signal frequency \cite{Villani2021}.
Moreover, when driven into the region of negative resistance, the RTD may self-oscillate and act as a resonant self-mixing device \cite{Villani2021}.
{Thus, the behaviour of RTDs differs substantially from that of Schottky diodes which exhibit a monotonic I-V characteristic \cite{Tietze2012}.
Hence, the EH models derived for GHz EH circuits equipped with Schottky diodes in \cite{Morsi2019, Hanif2023, Shanin2020} can not accurately characterize RTD-based EH circuits.}
Furthermore, the quality of signal rectification by EH circuits depends not only on the forward-bias I-V characteristic of the employed diode, but also on the diode breakdown voltage and reverse leakage current, which are different for RTDs and Schottky diodes \cite{Horowitz1989, Clochiatti2022}.
Although a complete model capturing the frequency dependence, non-linearity, and non-monotonicity of the current flow through RTDs is not available yet, in \cite{Clochiatti2022}, the authors developed a compact RTD model for Keysight ADS \cite{ADS2017} circuit simulation tool that is based on spline approximation and fits the measurement data presented in \cite{Clochiatti2022} for a wide range of operating frequencies.
{However, to the best of the authors' knowledge, an accurate model for EH circuits and information receivers employing RTDs has not been reported in the literature, yet.}

{Another important aspect of THz SWIPT systems is that, unlike SWIPT systems operating in the GHz frequency band, the design of coherent information RXs, which detect the phase of the received THz signal, is challenging due to the instability and phase noise of THz local oscillators and inefficiency of THz power dividers \cite{Yi2021}.
However, since the spectrum available in the THz band is significantly larger compared to that in the GHz frequency band, high data rates can also be achieved with unipolar \gls*{ask} modulation, which allows for non-coherent detection \cite{Yi2021, Hanif2023}. 
As a result, since EH circuits can also be regarded as envelope detectors, they can be used not only for EH, but also to extract information from the received THz signal and hence, inefficient high-frequency oscillators and THz power dividers at the RX can be avoided.}

In this work, we investigate the information rate-harvested power region and derive TX signal distributions maximizing the mutual information and achievable rate for THz SWIPT systems taking into account all non-linear non-monotonic characteristics of the employed RTD-based RX circuits.
{To the best of the authors' knowledge, this is the first work studying THz SWIPT systems employing efficient, high-speed RTDs for EH.}
{In contrast to other works on THz SWIPT, we introduce a single-user THz SWIPT system employing unipolar ASK modulation at the TX and an RTD-based EH circuit at the RX to extract both information and power from the received THz signal, such that inefficient THz power dividers and oscillators are avoided at the RX.}
The main contributions of this paper can be summarized as follows.
\begin{itemize}
	\item To characterize the instantaneous power of the output signal at an RTD-based RX, we propose a general non-linear non-monotonic EH model.
	The proposed parametric EH model consists of pieces of general 5-parameter logistic functions \cite{Gottschalk2005}, whose parameters are adjusted to fit circuit simulation results and thus, to accurately characterize the instantaneous harvested power at the RXs.
	Thus, it is substantially different from the existing piecewise linear and non-linear EH models derived for the average harvested power \cite{Shi2017, Pejoski2018, Boshkovska2015, Boshkovska2018, Zhu2021, Zhu2022a, Boshkovska2017a, Xu2022, Zhu2022}.
	{Furthermore, we complement the study in \cite{Shanin2023b}, which is the conference version of this paper, by considering not only the RTD developed in \cite{Clochiatti2022} but also two improved RTD designs with a lower reverse leakage current and a higher breakdown voltage, respectively, and by analyzing their impact on the performance of SWIPT systems.}
	\item {To determine the information rate-harvested power region of THz SWIPT systems, we formulate an optimization problem for the maximization of the mutual information between the TX and RX signals subject to constraints on the average harvested power at the RTD-based non-monotonic RX and the peak amplitude of the TX signal.
	We determine a feasibility condition for the problem, and unlike \cite{Shanin2023b}, we derive an algorithm that solves the formulated problem and provides the optimal transmit signal distribution with polynomial time complexity.}
	\item {Since the computational complexity for the maximization of the mutual information may be too high for a practical implementation, we derive the maximum achievable information rate and the corresponding low-complexity input signal distribution numerically and in closed form for high and low required average harvested powers, respectively.
	Additionally, we show that the maximum achievable information rate depends on the ratio between the minimum required average harvested power and the maximum instantaneous harvested power at the RX, and based on this observation, in contrast to \cite{Shanin2023b}, we propose a closed-form input distribution, which yields near-optimal performance.}
	\item {Our simulation results reveal that, in contrast to the proposed EH model, a baseline linear EH model and a baseline non-linear EH model, which was developed for EH circuits equipped with Schottky diodes operating in the GHz band in \cite{Morsi2019}, are not able to accurately capture the non-linear non-monotonic behavior of EH circuits employing RTDs in the THz band.}
	Moreover, we demonstrate that RTD designs with improved reverse leakage current and breakdown voltage are preferable when the received signal power is low and high, respectively.
	Furthermore, we show that the derived optimal and all proposed low-complexity suboptimal input signal distributions yield practically identical achievable information rates and mutual information between TX and RX.
	We illustrate the information rate-harvested power tradeoff in THz SWIPT systems and show that for low and high received signal powers, this tradeoff is determined by the peak amplitude of the transmit signal and the maximum instantaneous harvested power, respectively.
	Finally, we compare the proposed system design with two baseline schemes based on truncated Gaussian signals and optimal distribution for a linear EH model, respectively, and conclude that the transmit signal distribution, RX design, and EH model have to be carefully jointly developed for efficient THz SWIPT.
\end{itemize}

The rest of this paper is organized as follows. 
We present the adopted system model in Section II.
In Section III, we propose a novel parametric non-linear non-monotonic EH model for RTD-based RXs.
In Section IV, we derive novel input signal distributions for THz SWIPT.
In Section V, we present extensive numerical results.
Finally, in Section V, we draw our conclusions.

Throughout this paper, we use the following {\textit{notations}}.
We denote the sets of real, real non-negative, and non-negative integer numbers as $\mathbb{R}$, $\mathbb{R}_{+}$, and $\mathbb{N}$, respectively.
The real-part of a complex variable $x$ is denoted by $\mathcal{R} \{x\}$, whereas $j = \sqrt{-1}$ is the imaginary unit. 
Functions $f_s(s)$ and $F_s(s)$ denote the \gls*{pdf} and \gls*{cdf} of random variable $s$, respectively.
Furthermore, $f(x;y)$ denotes a function of variable $x$ parametrized by $y$, $\delta(\cdot)$ is the Dirac delta function, and $\triangleq$ means ``is defined as".
$\mathbb{E}_s \{\cdot\}$ stands for statistical expectation with respect to random variable $s$.
The domain and first-order derivative of one-dimensional function $g(\cdot)$ are denoted by $\mathcal{D}\{g\}$ and $g'(\cdot)$, respectively.


\section{System Model}
\label{Section:SysModel}
\begin{figure}[!t]
	\centering
	\includegraphics[draft = false, width = 0.5\textwidth]{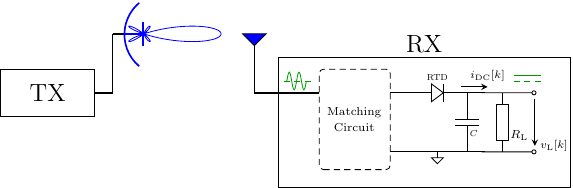}
	\caption{THz SWIPT system model employing a TX with a highly directive antenna and a single-antenna RX equipped with an \gls*{rtd}-based rectifying circuit.}
	\label{Fig:SystemModel}
\end{figure}

We consider the \gls*{swipt} system shown in Fig.~\ref{Fig:SystemModel} where a \gls*{tx} equipped with a highly directive antenna sends a THz signal to a single-antenna \gls*{rx}.
Since coherent demodulation of a THz received signal is challenging, the TX employs unipolar \gls*{ask} modulation for SWIPT \cite{Yi2021, Lemic2021}.
Thus, the THz signal $r(t) \in \mathbb{R}$ received at the RX can be expressed as follows \cite{Horowitz1989}:
\begin{equation}
	r(t) = \sqrt{2} \mathcal{R} \{[h s(t) + n(t)] \exp(j2\pi f_c t) \}, 
\end{equation}
\noindent where $h$ is the channel gain between TX and RX, which is assumed to be perfectly known\footnotemark\hspace*{0pt} at both devices, $f_c$ is the carrier frequency, and $s(t) = \sum_k s[k] \phi(t - kT)$ and $n(t)$ are \gls*{ecb} representations of the transmit signal and the \gls*{awgn} at the RX antenna, respectively.
\footnotetext{In this work, we investigate the maximum achievable performance of THz SWIPT systems, and therefore, we assume perfect channel knowledge at TX and RX \cite{Rong2017, Pan2022, Morsi2019}. 
We note that the transmit signal waveform design in this paper could be extended to account for imperfect channel knowledge, see, e.g., \cite{Zhu2021, Boshkovska2018} for robust designs for GHz SWIPT systems. This is an interesting direction for further research but is beyond the scope of this paper.}
Furthermore, $s[k] \in \mathbb{R}_{+}, k \in \mathbb{N},$ are \gls*{iid} realizations of a non-negative random variable $s$ with \gls*{pdf} $f_s(s)$, $\phi(t)$ is a rectangular pulse that takes value $1$, if $t \in [0, T)$, and $0$, otherwise, and $T$ is the duration of a symbol interval.
{The THz TX is connected to a constant power supply, and to avoid signal distortion at the TX due to the non-linearities of THz power amplifiers\footnote{We note that power losses at the TX depend on the TX signal power, may be caused by the non-ideality of TX circuit components, such as, e.g., THz antennas and power amplifies, and may exceed $10-20\,\SI{}{\decibel}$ \cite{Tataria2021, Shinohara2021, Yi2021, Nguyen2022, Li2022a, Kazan2021}. The modelling and design of THz SWIPT systems taking into account all power losses and imperfections of THz TXs is an interesting direction for future work, which is beyond the scope of this paper.} \cite{Morsi2019, Shanin2020}, the maximum transmit signal power is bounded by $A^2$.}
Thus, the support of $f_s$ is confined to the interval $[0,A]$, i.e., $\mathcal{D}\{f_s\} \subseteq [0, A]$.


\section{EH Modelling for RTD-based RXs}
\label{Section:EHModelling}
In this section, we present the RTD-based electrical EH circuit equipped at the RX and propose a general non-linear piecewise EH model to characterize the instantaneous output RX power.
\subsection{Electrical RX Circuit}
\label{Section:ReceivedSignal}
\begin{figure}[!t]
	\centering
	\includegraphics[draft = false, width=0.4\textwidth]{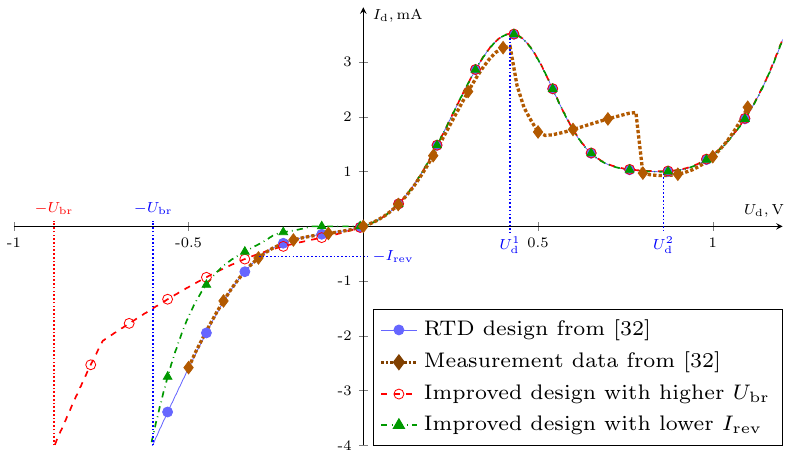}
	\caption{I-V characteristic of the Keysight ADS RTD design in \cite{Clochiatti2022} matched to the measurement data in \cite{Clochiatti2022}, and I-V characteristics of two improved RTD designs with higher breakdown voltage $U_\text{br}$ and lower reverse leakage current $I_\text{rev}$, respectively.}
	\label{Fig:IV_Curve}
\end{figure}

Since Schottky diodes are not efficient for THz signal demodulation and EH \cite{Villani2021, Shinohara2021}, we assume that the RX is equipped with an EH circuit comprising an antenna, a matching circuit, and an RTD-based rectifier with load resistance $R_\text{L}$ \cite{Clerckx2019, Morsi2019, Shanin2020}, as shown in Fig.~\ref{Fig:SystemModel}.
The matching circuit at the RX is a passive device that is designed to match the impedance of the antenna (typically, $\SI{50}{\ohm}$ or $\SI{75}{\ohm}$) to the input impedance of the rectifier \cite{Clerckx2019, Morsi2019, Shanin2020}, which is composed of the RTD, a low-pass filter with capacitance $C$, and a load resistance $R_\text{L}$ \cite{Horowitz1989, Tietze2012}.
{ Practical examples of RX load devices include, e.g., low-power on-body sensors for healthcare \cite{Kuscu2021, Akyildiz2015, Pramanik2020}, sensors for environmental monitoring, or batteries to store the harvested power for future use \cite{Shanin2021a, Cho2020}.}

{In contrast to other works on SWIPT system design \cite{Zhang2013, Rong2017, Pan2022, Kim2022, Clerckx2019, Xu2022, Morsi2019, Zhu2022}, since EH circuits can be regarded as envelope detectors \cite{Tietze2012, Shanin2020}, we utilize the \gls*{dc} voltage $v_\text{L}(t), t \in [(k-1)T, kT),$ generated at the output\footnote{To avoid inefficient high-frequency power dividers, in a practical RX design, one may use a DC current divider \cite{Horowitz1989} to split the output RX current $i_\text{DC}[k]$ in Fig.~\ref{Fig:SystemModel} and utilize portions $\nu i_\text{DC}[k]$ and $(1-\nu) i_\text{DC}[k]$ of the current flow $i_\text{DC}[k], \forall k,$ for EH and information detection, respectively, where $\nu \in [0,1]$ is the current splitting ratio, which depends on the resistivity of the EH load device and information detector. However, since the output signal-to-noise (SNR) ratio and information rate at the information detector do not depend on the ratio $\nu$, we assume that the EH load and information detector are chosen such that $\nu \to 1$.} of the RX circuit in time slot $k, k\in\mathbb{N},$ not only to charge the load, but also to decode information signal $s[k]$, as this avoids the need for inefficient THz oscillators and high-frequency power dividers at the RX.}
Furthermore, as in \cite{Shanin2023b} and \cite{Morsi2019}, we neglect the ripples of the output voltage $v_\text{L}(t)$ across resistance $R_\text{L}$ and the dependence of $v_\text{L}(t)$ in time slot $k, k \in \mathbb{N},$ on information signals $s[p]$ transmitted prior to that time slot, $p < k, p \in \mathbb{N}$, caused by the non-ideality of the employed low-pass filter.
Thus, we express the signal used for information decoding at the RX output, $y[k] \in \mathbb{R}$, in time slot $k, \forall k \in \mathbb{N},$ as follows:
\begin{equation}
	y[k] \triangleq \frac{v_\text{L}[k]}{\sqrt{R_\text{L}}} = \sqrt{\psi( |h s[k]|^2) } + n[k],
	\label{Eqn:OutputRxSignal}
\end{equation}
\noindent where $v_\text{L}[k]$, $n[k]$, and $\psi(\cdot)$ are the output voltage, equivalent output noise sample, and the function that maps the instantaneous received signal power $\rho[k] = | h s[k] |^2$ to the instantaneous power harvested at the load $R_\text{L}$ in time slot $k, \forall k\in\mathbb{N},$ respectively.
Function $\psi(\cdot)$, which models the RTD-based RX circuit in Fig.~\ref{Fig:SystemModel}, will be derived in the next section.
The output noise $n[k]$ in time slot $k, \forall k,$ is composed of the received noise from external sources and the internal thermal noise generated by the components of the RX electrical circuit.
In the following, for the derivation of the average harvested power at the RX, we assume that the impact of noise is negligible \cite{Clerckx2019, Morsi2019, Shanin2020}.
Furthermore, to characterize the performance of information decoding, we assume that the thermal noise originating from the RX components dominates, and thus, we neglect the dependency of $n[k]$ on the received power $\rho[k], \forall k$ \cite{Lapidoth2009}.
Hence, we model the output noise samples $n[k], \forall k,$ as \gls*{iid} realizations of AWGN with zero mean and variance $\sigma^2$.

\subsection{I-V Characteristics of RTDs}
\label{Section:RTDBasedRxCircuit}

In the following, we discuss the RTD employed at the THz RX in Fig.~\ref{Fig:SystemModel}.
We note that RTDs have multiple quantum barriers, and thus, the current flow $I_\text{d}$ through the RTD is not a monotonically increasing function of the applied voltage $U_\text{d}$ \cite{Clochiatti2022}.
Instead, the I-V curve $I_\text{d}(U_\text{d})$ of RTDs may have multiple critical points, as shown in Fig.~\ref{Fig:IV_Curve}.
For simplicity of presentation, we assume a triple-barrier RTD, whose I-V characteristic has a single region where the current flow $I_\text{d}$ decreases when the applied voltage $U_\text{d}$ increases, see Fig.~\ref{Fig:IV_Curve} \cite{Clochiatti2022}.

An ideal diode has zero reverse current flow when the applied voltage is negative, i.e., for $U_\text{d} \in (-\infty, 0]$ \cite{Tietze2012}.
However, due to the presence of minority charge carriers in the semiconductor, the reverse current of a realistic diode is negative, i.e., $I_\text{d} < 0$, even for small negative voltages $U_\text{d} < 0$ \cite{Tietze2012}.
Moreover, if the voltage applied to an RTD is negative and smaller than the breakdown voltage, i.e., $U_\text{d} \leq - U_\text{br}$, the reverse current $I_\text{d}$ grows large, as shown in Fig.~\ref{Fig:IV_Curve}, and may damage the RX \cite{Tietze2012}.

Since the current flow of an RTD is determined by quantum processes and additionally depends on the signal frequency, determining the exact I-V characteristic of RTDs does not seem possible \cite{Clochiatti2022}.
Therefore, for our numerical results in Section~\ref{Section:NumericalResults}, we employ the compact Keysight ADS \cite{ADS2017} numerical model of a triple-barrier RTD developed in \cite{Clochiatti2022}, which was shown to fit the measured\footnotemark\hspace*{0pt} I-V characteristic of the diode designed in \cite{Clochiatti2022}, see Fig.~\ref{Fig:IV_Curve}.
Furthermore, since the rectifier performance depends on the negative-bias characteristics of the diode \cite{Horowitz1989}, to investigate the impact of the reverse leakage current and breakdown voltage of the RTD on SWIPT system performance, we also consider two improved RTD designs. 
In particular, by adjusting the parameters of the ADS diode model developed in \cite{Clochiatti2022}, we design two improved RTDs, whose reverse current $I_\text{rev}$ is lower and breakdown voltage $U_\text{br}$ is higher compared to the RTD in \cite{Clochiatti2022}, respectively.
The I-V characteristics of the RTDs with improved $I_\text{rev}$ and $U_\text{br}$ are depicted by a green dash-dotted line and a red dashed line in Fig.~\ref{Fig:IV_Curve}, respectively.
\footnotetext{We note that the discrepancy between the I-V characteristic of the ADS RTD design in \cite{Clochiatti2022} and the measurement data in Fig.~\ref{Fig:IV_Curve} is caused by non-ideality (parasitic oscillations) of the measurement setup used in \cite{Clochiatti2022}.}
\subsection{Proposed EH Model}
\label{Section:ProposedEHModel}

In contrast to Schottky diodes utilized for RF EH in \cite{Morsi2019, Shanin2020}, the I-V characteristic of RTDs is not monotonic, as shown in Fig.~\ref{Fig:IV_Curve}.
Therefore, the instantaneous harvested power $P_\text{h} = \frac{v_\text{L}^2}{R_\text{L}}$ of an RTD-based EH circuit may not be a monotonic non-decreasing function of the instantaneous input power $\rho = |hs|^2$ \cite{Clochiatti2022, Morsi2019}.
Moreover, since a closed-form expression for the current flow $I_\text{d}$ through an RTD is not available, the derivation of an accurate closed-form expression for the harvested power is not feasible \cite{Clochiatti2022}.
Therefore, in the following, we propose a \textit{general} \textit{parametric} \textit{non-linear} EH model to characterize the instantaneous power harvested at the RX.
{We note that, unlike \cite{Boshkovska2015}, where a parametric model for the average harvested power of EH circuits based on Schottky diodes was proposed, here, we develop a parametric model for the instantaneous harvested power of THz EH circuits employing RTDs.}

\paragraph*{General Piecewise EH Model}
We model the dependence of the instantaneous harvested power on the instantaneous received signal power $\rho$ with the following continuous piecewise function $\psi(\rho)$ shown in Fig.~\ref{Fig:EH_Model}:
\begin{equation}
	\psi (\rho) = \begin{cases}
		\varphi_1(\rho), \; &\text{if} \; \rho \in [\rho_0, \rho_1),\\
		\varphi_2(\rho), \; &\text{if} \; \rho \in [\rho_1, \rho_2),\\
		\cdots \\
		\varphi_N(\rho), \; &\text{if} \; \rho \in [\rho_{N-1}, \rho_\text{max}],
	\end{cases}
	\label{Eqn:EHmodel}
\end{equation} 
\noindent where function $\psi(\cdot)$ is defined in the domain $\mathcal{D}\{\psi\} = [\rho_0, \rho_\text{max}]$ and $\rho_\text{max}$ is the maximum instantaneous received signal power that does not drive the RTD into breakdown.
Here, for modelling $\psi(\cdot)$, we use\footnotemark\hspace*{0pt} $N \in \mathbb{N}$ monotonic functions $\varphi_n(\cdot)$ with domains $\mathcal{D}\{\varphi_n\} = [\rho_{n-1}, \rho_n), n \in \{1,2,\cdots, N-1\}$, and $\mathcal{D}\{\varphi_N\} = [\rho_{N-1}, \rho_\text{max}]$, where $0 \triangleq \rho_0 \leq \rho_1 \leq \rho_2 \leq \cdots \leq \rho_N \triangleq \rho_\text{max}.$
\footnotetext{For the RX circuit shown in Fig.~\ref{Fig:SystemModel} with a single triple-barrier RTD, the number of functions required to model $\psi(\cdot)$ does not exceed $N = 3$ \cite{Villani2021, Clochiatti2022}. However, for more complex RX designs with multiple-barrier RTDs, we may need $N > 3$ functions for accurate EH modelling.}
The required number of functions $N$ depends on the number of critical points in the I-V characteristic and the breakdown voltage $U_\text{br}$ of the RTD.
Furthermore, since the I-V characteristic of an RTD includes both regions where $I_\text{d}$ increases when $U_\text{d}$ grows and where $I_\text{d}$ decreases when $U_\text{d}$ grows, as shown in Fig.~\ref{Fig:IV_Curve}, we adopt parametric monotonically increasing and decreasing functions $\varphi_n(\cdot)$ for odd and even values of $n$, i.e., $n \in \{1,3,\cdots\}$ and $n \in \{2,4,\cdots\}$ with $n \leq N$, respectively, as shown in Fig.~\ref{Fig:EH_Model}.
Finally, we express the average harvested power at the RX as function $\bar{P}_{s}(f_s)$ of the input pdf $f_s(\cdot)$, or equivalently, function $\bar{P}_{x}(f_x)$ of the pdf $f_x(\cdot)$ of signal $x = \sqrt{\psi( |h s[k]|^2) }$ in (\ref{Eqn:OutputRxSignal}) as follows:
\begin{equation}
	\bar{P}_{s}(f_s) \triangleq \mathbb{E}_s \{ \psi(|hs|^2) \} = \mathbb{E}_x \{ x^2 \} \triangleq \bar{P}_{x}(f_x). 
	\label{Eqn:AverageHarvestedPower}
\end{equation}
\noindent Here, we neglect the impact of noise since its contribution to the average harvested power is negligible \cite{Boshkovska2015, Morsi2019, Shanin2020}.
\begin{figure*}[!t]
	\centering
	\begin{minipage}{.5\textwidth}
		\flushleft
		\includegraphics[draft=false, width=0.8\textwidth]{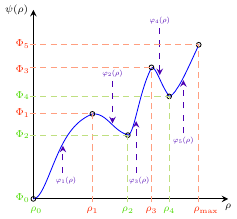}
		\caption{Proposed EH model in (\ref{Eqn:EHmodel}) with $N = 5$ monotonic functions $\varphi_n(\cdot)$.}
		\label{Fig:EH_Model}
	\end{minipage}\hspace*{25pt}
	\begin{minipage}{0.5\textwidth}
		\flushleft
		\includegraphics[draft=false, width=0.8\textwidth]{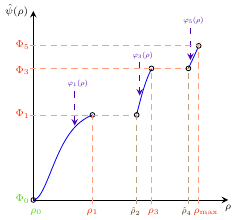}
		\caption{Equivalent EH model for the solution of (\ref{Eqn:GeneralOptimizationProblem}).}
		\label{Fig:EH_ModelEquiv}
	\end{minipage}
\end{figure*}

\paragraph*{\color{black}Parametric model of $\varphi_n(\cdot)$}
{In contrast to \cite{Shi2017, Pejoski2018, Boshkovska2015, Boshkovska2018, Zhu2021, Boshkovska2017a, Xu2022, Zhu2022, Zhu2022a}, an accurate analytical derivation of the non-linear functions $\varphi_n(\cdot), n\in\{1,2,\cdots, N\},$ does not seem feasible for RTD-based EH circuits.
Therefore, we propose to model $\varphi_n(\cdot), \forall n,$ by the following $5$-parameter logistic function \cite{Gottschalk2005}:
\begin{equation}
	\varphi_n(\rho) = B_n + (\Phi_{n-1} - B_n) \Big[ 1 + \big(\theta_n (\rho - \rho_{n-1})\big)^{\alpha_n} \Big]^{-\beta_n}.
	\label{Eqn:SigmoidalFunctionModel}
\end{equation}
\noindent Here, $B_n = \lim_{\rho \to \infty} \varphi_n(\rho)$ and $\Phi_n = \varphi_{n}(\rho_{n}), \forall n,$ with $\Phi_0 = 0$ and $\Phi_n \leq \Phi_{n+2}, \forall n \in \{0,1,\cdots, N-2\}$, as shown in Fig.~\ref{Fig:EH_Model}.
Parameters $\alpha_n, \beta_n, \theta_n, B_n, \rho_{n}  \in \mathbb{R}_+$ in (\ref{Eqn:SigmoidalFunctionModel}) characterize the non-linearity of $\varphi_n(\cdot), \forall n,$ and can be obtained through a curve-fitting approach to maximize the agreement between $\psi(\cdot)$ and measurement or simulation data.
We note that in contrast to \cite{Shi2017, Pejoski2018, Boshkovska2015, Boshkovska2018, Zhu2021, Zhu2022a, Boshkovska2017a, Xu2022, Zhu2022}, where the average harvested power of Schottky diode-based EH circuits was considered, the proposed piecewice non-monotonic EH model is designed to accurately characterize the non-linear non-monotonic instantaneous harvested power at the RTD-based EH circuit.}
Furthermore, as in \cite{Zhang2013, Rong2017, Pan2022, Kim2022, Clerckx2019, Boshkovska2015, Morsi2019, Shanin2020, Xu2022, Boshkovska2018, Boshkovska2017a}, we assume that all parameters of the RX circuit are perfectly known at the TX.


\section{Information Rate-Harvested Power Tradeoff}
\label{Section:RatePowerTradeoffs}
In this section, we determine the optimal input distribution and characterize the information rate-harvested power tradeoff of the considered THz SWIPT system.
To this end, we first formulate and solve an optimization problem for the maximization of the mutual information between TX and RX subject to constraints on the average harvested power at the RX and the peak TX signal power.
Next, since solving the optimization problem is computationally expensive for practical SWIPT systems, we obtain the input signal distribution maximizing the achievable information rate for a given required average harvested power at the RX. 
Finally, based on this result, we propose a closed-form TX signal distribution that also provides a suboptimal solution of the formulated optimization problem.
\subsection{Problem Formulation and Optimal Solution}
\label{Section:ProblemFormulationOptimalSolution}
We characterize the performance of the THz SWIPT system by the tradeoff between the average harvested power at the RX $\bar{P}_{s}(f_s)$ in (\ref{Eqn:AverageHarvestedPower}) and the mutual information $I_s(\cdot)$ between TX and RX that can be expressed in nats per channel use as follows \cite{Morsi2019, Shanin2020, Grover2010}:
\begin{equation}
	I_s(f_s) = -\int_{-\infty}^{\infty} f_y(y; f_x) \ln f_y(y; f_x) \text{d}y - \frac{1}{2} \ln(2\pi e \sigma^2).
	\label{Eqn:MutualInformation}
\end{equation}
Here, $f_y(y; f_x)$ is the pdf of the RX output signal $y = x + n$ in (\ref{Eqn:OutputRxSignal}) as a function of the pdf $f_x$ of output information signal $x = \sqrt{\psi( |h s[k]|^2) }$ and is given by \cite{Shanin2020, Grover2010}:
\begin{equation}
	f_y(y; f_x) = \int_{-\infty}^{\infty} f_x(x; f_s) f_n(y-x)\text{d}x,
	\label{Eqn:OutputPdf}
\end{equation}
\noindent where $f_x(x; f_s)$ is the pdf of the output information signal $x$ for a given input pdf $f_s$ and $f_n(n) = \frac{1}{\sqrt{2\pi \sigma^2}} \exp(-\frac{n^2}{2 \sigma^2})$ is the pdf of the AWGN $n$.

To characterize the information rate-harvested power tradeoff, we determine the pdf $f_s$ of transmit signal $s$ that maximizes the mutual information $I_s(\cdot)$ for a given required average harvested power $\bar{P}_\text{req}$ at the RX \cite{Morsi2019, Varshney2008, Shanin2020}.
To this end, we formulate the following optimization problem:
\begin{subequations}
	\begin{align}
		\maximize_{f_s \in \mathcal{F}_{s}  }\quad \; &I_s(f_s) \label{Eqn:GeneralObj}\\
		\subjectto \quad\; & \bar{P}_{s}(f_s) \geq \bar{P}_\text{req}, \label{Eqn:GeneralOptConstr1}
	\end{align}
	\label{Eqn:GeneralOptimizationProblem}
\end{subequations}
\noindent\hspace*{-4pt}where $\mathcal{F}_{ s } = \{f_s \; \vert \; \mathcal{D}\{f_s\} \subseteq [0, \bar{A}], \int_{s}f_s(s)\, \text{d}s = 1 \}$ denotes the set of feasible input pdfs whose support does not exceed the maximum value $\bar{A} = \min \{A, \frac{ \sqrt{\rho_\text{max}} }{|h|}\}$, such that the transmit signal amplitude is upper-bounded by $A$ and the RTD is not driven into breakdown.

To solve (\ref{Eqn:GeneralOptimizationProblem}), in the following proposition, we first determine under which condition optimization problem (\ref{Eqn:GeneralOptimizationProblem}) is feasible.
\begin{proposition}
	For a given $\bar{A} = \min \{A, \frac{ \sqrt{\rho_\text{\upshape max}} }{|h|}\}$, optimization problem (\ref{Eqn:GeneralOptimizationProblem}) is feasible if and only if $\bar{P}_\text{\upshape req} \in \Big[0, \bar{P}_\text{\upshape max} \Big]$ with $\bar{P}_\text{\upshape max} = \max_{\rho \in [0, |h\bar{A}|^2 ] } \psi(\rho)$.
	\label{Prop:Feasibility}
\end{proposition}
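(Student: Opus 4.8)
The plan is to recognize that optimization problem (\ref{Eqn:GeneralOptimizationProblem}) is feasible precisely when its constraint set $\{f_s \in \mathcal{F}_s \mid \bar{P}_s(f_s) \geq \bar{P}_\text{req}\}$ is non-empty, i.e., when the largest average harvested power attainable by any admissible input pdf is at least $\bar{P}_\text{req}$. Hence the proof reduces to evaluating $\max_{f_s \in \mathcal{F}_s} \bar{P}_s(f_s)$ and showing it equals $\bar{P}_\text{max} = \max_{\rho \in [0, |h\bar{A}|^2]} \psi(\rho)$. First I would note that this maximum is attained: $\psi(\cdot)$ is continuous on the compact domain $\mathcal{D}\{\psi\} = [0, \rho_\text{max}]$ by construction in (\ref{Eqn:EHmodel}), and since $\bar{A} = \min\{A, \sqrt{\rho_\text{max}}/|h|\}$ guarantees $[0, |h\bar{A}|^2] \subseteq [0, \rho_\text{max}]$, the restriction of $\psi$ to the compact set $[0, |h\bar{A}|^2]$ attains its supremum at some $\rho^\star$.

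For the upper bound, since $\mathcal{D}\{f_s\} \subseteq [0, \bar{A}]$ forces $|hs|^2 \in [0, |h\bar{A}|^2]$ for every $s$ in the support, the pointwise inequality $\psi(|hs|^2) \leq \bar{P}_\text{max}$ holds, and by linearity of the expectation in (\ref{Eqn:AverageHarvestedPower}) we get $\bar{P}_s(f_s) = \mathbb{E}_s\{\psi(|hs|^2)\} \leq \bar{P}_\text{max}$ for all $f_s \in \mathcal{F}_s$. For achievability, I would exhibit the degenerate input $f_s(s) = \delta(s - s^\star)$ with $s^\star = \sqrt{\rho^\star}/|h| \in [0, \bar{A}]$, which lies in $\mathcal{F}_s$ and yields $\bar{P}_s(f_s) = \psi(|hs^\star|^2) = \psi(\rho^\star) = \bar{P}_\text{max}$. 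This establishes $\max_{f_s \in \mathcal{F}_s} \bar{P}_s(f_s) = \bar{P}_\text{max}$. For the lower endpoint, the harvested power is non-negative with $\psi(0) = 0$ (the parametrization in (\ref{Eqn:SigmoidalFunctionModel}) gives $\Phi_0 = 0$), so every admissible pdf yields $\bar{P}_s(f_s) \geq 0$; thus values $\bar{P}_\text{req} \leq 0$ are trivially satisfiable while values exceeding $\bar{P}_\text{max}$ are not, and together with the physical requirement $\bar{P}_\text{req} \geq 0$ this delimits the feasible range to $[0, \bar{P}_\text{max}]$.

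Putting these together yields the claimed equivalence: if $\bar{P}_\text{req} \leq \bar{P}_\text{max}$, the point mass $\delta(s - s^\star)$ satisfies (\ref{Eqn:GeneralOptConstr1}), so the problem is feasible; conversely, if $\bar{P}_\text{req} > \bar{P}_\text{max}$, then no $f_s \in \mathcal{F}_s$ can meet the constraint because $\bar{P}_s(f_s) \leq \bar{P}_\text{max} < \bar{P}_\text{req}$, so the problem is infeasible. The only step requiring care is that the maximizer is a degenerate (point-mass) distribution rather than an absolutely continuous pdf; I would justify admitting such $\delta$-distributions by the paper's convention of representing input distributions through generalized densities, or, if one insists on proper pdfs, by approximating $\delta(s - s^\star)$ with a sequence of narrow densities concentrated at $s^\star$ and invoking continuity of $\psi$ to drive $\bar{P}_s(f_s)$ arbitrarily close to $\bar{P}_\text{max}$, which preserves feasibility for every $\bar{P}_\text{req} < \bar{P}_\text{max}$.
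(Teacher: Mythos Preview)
Your proposal is correct and follows essentially the same approach as the paper: bound the average harvested power by the pointwise maximum $\bar{P}_\text{max}$ of $\psi(\cdot)$ over the admissible received-power range, and exhibit a point-mass input to achieve feasibility. The only cosmetic difference is that you place the point mass at the maximizer $s^\star$ (so that $\bar{P}_s = \bar{P}_\text{max} \geq \bar{P}_\text{req}$), whereas the paper picks $s_0$ with $\psi(|hs_0|^2) = \bar{P}_\text{req}$ to satisfy the constraint with equality; both choices work, and yours avoids an implicit appeal to the intermediate value theorem.
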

\begin{proof}
	We note that for a given maximum transmit signal amplitude $\bar{A}$, the instantaneous harvested power can not exceed $\bar{P}_\text{\upshape max}$.
	Then, the average power harvested at the RTD-based RX is upper-bounded by $\max_{f_s \in {\mathcal{F}}_{s}} \bar{P}_\text{s}(f_s) = \bar{P}_\text{\upshape max}$.
	Thus, for any $\bar{P}_\text{req} > \bar{P}_\text{\upshape max}$, a solution of (\ref{Eqn:GeneralOptimizationProblem}) does not exist.
	On the other hand, for any $\bar{P}_\text{req} \in[0,  \bar{P}_\text{\upshape max}]$, there exists at least one pdf $f^1_s = \delta(s-s_0) \in \mathcal{F}_{ s }$ that satisfies constraint (\ref{Eqn:GeneralOptConstr1}) with equality, where $s_0 \leq {\bar{A}}$ is chosen such that $\psi(|hs_0|^2) = \bar{P}_\text{req}$.
	This concludes the proof.
\end{proof}

Proposition~\ref{Prop:Feasibility} highlights that for a given peak amplitude ${ \bar{A} }$, the average power harvested at the RX is bounded by $\bar{P}_\text{\upshape max}$, and hence, a solution of (\ref{Eqn:GeneralOptimizationProblem}) does not exist for $\bar{P}_\text{\upshape req} > \bar{P}_\text{\upshape max}$.
Furthermore, we note that if $\bar{P}_\text{\upshape req} = \bar{P}_\text{\upshape max}$, the optimal pdf solving (\ref{Eqn:GeneralOptimizationProblem}) is trivial and given by $f_s^\text{opt}(s) = \delta(s - \bar{A})$.
In the following, we consider the case, where optimization problem (\ref{Eqn:GeneralOptimizationProblem}) is feasible and the solution is not trivial, i.e., $\bar{P}_\text{\upshape req} \in [0, \bar{P}_\text{\upshape max})$.

We note that since function $\psi(\cdot)$ is, in general, not invertible, the mapping between the input pdf $f_s$ and the pdf $f_x$, and hence, mutual information $I_s(\cdot)$ in (\ref{Eqn:MutualInformation}) may not be unique.
Therefore, in the following proposition, we show that to determine an optimal input pdf $f_s^{\text{opt}}$ as solution of (\ref{Eqn:GeneralOptimizationProblem}), the proposed general piecewise EH model in (\ref{Eqn:EHmodel}) can be replaced by an equivalent EH model that is increasing in its domain.
\begin{proposition}
	There exists an optimal input pdf solving (\ref{Eqn:GeneralOptimizationProblem}) that satisfies $f_s^{\text{\upshape opt}}(s) = 0$, $\forall s \in \mathcal{S}$, where $\mathcal{S} = (s_1, \hat{s}_2] \cup (s_3, \hat{s}_4] \cup \cdots \cup (s_{\hat{N}-1}, \hat{s}_{\hat{N}}]$.
	Here, $s_n = \sqrt{ \frac{\rho_n}{|h|^2} }$ and $\hat{s}_n = \sqrt{ \frac{\hat{\rho}_n}{|h|^2} }, n \in \{1, 3, 5, \cdots, \hat{N}\}$.
	Furthermore, $\hat{\rho}_n = \varphi_{n+1}^{-1} \big( \varphi_{n-1}(\rho_{n-1}) \big), n \in \{2, 4, 6, \cdots, \hat{N}-1\}$, and $\hat{N} \in \mathbb{N}$ is the maximum $n\leq N$ such that $\hat{\rho}_{n-1}$ exists.
	\label{Prop:EquivalentModel}
\end{proposition}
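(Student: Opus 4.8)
The plan is to reduce problem (\ref{Eqn:GeneralOptimizationProblem}) to the distribution $f_x$ of the output information signal $x=\sqrt{\psi(|hs|^2)}$ and to show that the optimal $f_x$ is attainable without placing any mass on $\mathcal{S}$. The key structural fact, already visible in (\ref{Eqn:AverageHarvestedPower}) and (\ref{Eqn:MutualInformation})--(\ref{Eqn:OutputPdf}), is that both the objective $I_s(f_s)$ and the constraint $\bar{P}_s(f_s)$ depend on $f_s$ \emph{only} through $f_x$: the harvested power is $\mathbb{E}_x\{x^2\}$ and the mutual information is a functional of $f_y(y;f_x)$, which is the convolution of $f_x$ with the fixed noise pdf $f_n$. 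Hence any two feasible input pdfs inducing the same $f_x$ are indistinguishable to (\ref{Eqn:GeneralOptimizationProblem}); they attain the same mutual information and the same average harvested power. It therefore suffices to exhibit, for an optimal $f_s^{\text{opt}}$, an equivalent input pdf supported on $[0,\bar{A}]\setminus\mathcal{S}$ that induces the same $f_x$.

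Next I would make precise the redundancy that $\mathcal{S}$ removes. Consider the $k$-th ``valley'' of $\psi$, formed by the decreasing piece $\varphi_{2k}$ on $[\rho_{2k-1},\rho_{2k})$ followed by the increasing piece $\varphi_{2k+1}$. By continuity and the definitions $\Phi_n=\varphi_n(\rho_n)$, the point $\hat\rho_{2k}=\varphi_{2k+1}^{-1}(\varphi_{2k-1}(\rho_{2k-1}))=\varphi_{2k+1}^{-1}(\Phi_{2k-1})$ is exactly where the rising piece $\varphi_{2k+1}$ climbs back to the pre-valley peak $\Phi_{2k-1}$. Over the $\rho$-interval $(\rho_{2k-1},\hat\rho_{2k}]$ --- which in $s$-space is $(s_{2k-1},\hat s_{2k}]$ --- the map $\psi$ attains only values in $[\Phi_{2k},\Phi_{2k-1}]$. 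Invoking the assumed ordering $\Phi_n\leq\Phi_{n+2}$, these same values are already attained on the preceding rising piece $\varphi_{2k-1}$, whose range $[\Phi_{2k-2},\Phi_{2k-1}]$ contains $[\Phi_{2k},\Phi_{2k-1}]$ because $\Phi_{2k-2}\leq\Phi_{2k}$. Moreover the retained pieces connect continuously, since $\varphi_{2k+1}(\hat\rho_{2k})=\Phi_{2k-1}$ matches the top of $\varphi_{2k-1}$. Consequently, $\psi$ restricted to $[0,\rho_\text{max}]\setminus\bigcup_k(\rho_{2k-1},\hat\rho_{2k}]$ is monotonically increasing and has the \emph{same} range as $\psi$ on its full domain; this is precisely the increasing equivalent model of Fig.~\ref{Fig:EH_ModelEquiv}.

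With this in hand, the construction is a transport argument. Given an optimal $f_s^{\text{opt}}$, push it forward through $s\mapsto x$ to obtain the induced $f_x^{\text{opt}}$. Because the restriction of $s\mapsto x$ to $[0,\bar{A}]\setminus\mathcal{S}$ is a strictly increasing bijection onto the full range of $x$, I would pull $f_x^{\text{opt}}$ back through the inverse of this restricted map to define a new input pdf $\tilde f_s$ supported on $[0,\bar{A}]\setminus\mathcal{S}$. By construction $\tilde f_s$ induces the identical $f_x^{\text{opt}}$, so it attains the same mutual information and average harvested power; it is therefore feasible and optimal while satisfying $\tilde f_s(s)=0$ for all $s\in\mathcal{S}$, which is the claim.

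The main obstacle is verifying that excising $\mathcal{S}$ loses no output value, i.e., that the restricted map stays surjective onto the range of $\psi$. This is where the ordering $\Phi_n\leq\Phi_{n+2}$ is indispensable, and where the definition of $\hat N$ must be handled: if some rising piece $\varphi_{2k+1}$ never recovers to its pre-valley peak $\Phi_{2k-1}$ (or no such rising piece exists), then $\varphi_{2k+1}^{-1}(\Phi_{2k-1})$ is undefined, that valley is only partially redundant, and the construction must terminate at the largest index $\hat N\leq N$ for which $\hat\rho_{\hat N-1}$ exists --- exactly the truncation in the statement. A secondary, more routine point is the measure-theoretic legitimacy of the pushforward/pullback: the restricted map is a piecewise strictly monotonic, hence measurable, bijection with measurable inverse, so $\tilde f_s$ is a bona fide pdf and the change of variables preserves both $\mathbb{E}_x\{x^2\}$ and the law of $x$.
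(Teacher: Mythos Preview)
Your argument is correct and follows essentially the same approach as the paper: both rely on the fact that $I_s$ and $\bar P_s$ depend on $f_s$ only through the induced law of $x=\sqrt{\psi(|hs|^2)}$, and that every $x$-value produced by some $\tilde s\in\mathcal{S}$ is also produced by a point $\bar s\leq\tilde s$ outside $\mathcal{S}$, so mass can be relocated without affecting feasibility or the objective. Your treatment is in fact more careful than the paper's short proof---you explicitly verify surjectivity of the restricted map via the ordering $\Phi_n\leq\Phi_{n+2}$, explain why the truncation at $\hat N$ is needed, and address the measure-theoretic soundness of the pushforward/pullback---whereas the paper simply asserts the existence of $\bar s$ and moves the mass point by point.
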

\begin{proof}
	Let us consider a distribution with pdf $f_s \in \mathcal{F}_{s}$, which has a mass point at $\tilde{s} \in \mathcal{S}$.
	We note that there exists a point $\bar{s} \notin \mathcal{S}$ and $\bar{s} \leq \tilde{s}$ that yields the same signal $x=\sqrt{\psi( |hs|^2) }$ as $\tilde{s}$, i.e., $\sqrt{\psi( |h\bar{s}|^2) } = \sqrt{\psi( |h\tilde{s}|^2) }$, as shown in Fig.~\ref{Fig:EH_ModelEquiv}.
	Thus, for any such distribution $f_s$, an equal or larger value of $\bar{P}_\text{s}(\cdot)$ and $I_s(\cdot)$ can be attained by removing the mass point $\tilde{s}$ and increasing the probability of $\bar{s}$ by $f_s(\tilde{s})$.
	Since the new probability is in $\mathcal{F}_{s}$, there exists an optimal input pdf $f_s^\text{opt}$ as solution of (\ref{Eqn:GeneralOptimizationProblem}) that does not have mass points in $\mathcal{S}$. This concludes the proof.
\end{proof}

Proposition~\ref{Prop:EquivalentModel} reveals that the solution of problem (\ref{Eqn:GeneralOptimizationProblem}) may not be unique and there exists an optimal input pdf with $f_s^\text{opt} = 0, \forall s \in \mathcal{S}$.
Thus, for any given pdf $f_x$, an input pdf $f_s$ that yields $f_x$ can be obtained as $f_s(s) = \frac{\partial}{\partial s} F_x(\hat{\psi} (|hs|^2))$, where $F_x(x) = \int_{0}^x f_x(\tilde{x}) \text{d}\tilde{x}$ is the corresponding cdf of $x$.
Here, $\hat{\psi}(\cdot)$ is the equivalent EH model with $\hat{\psi}(\rho) = \psi(\rho), \forall \rho \in \mathcal{D}\{ \hat{\psi} \},$ that is shown in Fig.~\ref{Fig:EH_ModelEquiv}.
The domain of function $\hat{\psi}(\cdot)$ is $\mathcal{D}\{ \hat{\psi} \} = [\rho_0, \rho_1) \cup [\hat{\rho}_2, \rho_3) \cup [\hat{\rho}_4, \rho_5) \cup \cdots \cup [\hat{\rho}_{\hat{N}-1}, \rho_{\text{max}}]$.
We note that in contrast to EH model $\psi(\cdot)$ in (\ref{Eqn:EHmodel}), function $\hat{\psi}(\cdot)$ is monotonically increasing in its domain $\mathcal{D}\{\hat{\psi}\}$, and thus, is invertible.
Therefore, for a given $f_x$, the input pdf $f_s$ exists and is unique.

To obtain an optimal pdf $f_s^\text{opt}$ as solution of (\ref{Eqn:GeneralOptimizationProblem}), we can first determine the optimal pdf $f_x^\text{opt}$ of the signal $x$ that solves the following equivalent optimization problem:
\begin{subequations}
	\begin{align}
		\maximize_{f_x \in \mathcal{F}_{x}  }\quad \; &{I}_{x}(f_x) \label{Eqn:RefOutputObj}\\
		\subjectto \quad\; & \bar{P}_{x}(f_x) \geq \bar{P}_\text{req}, \label{Eqn:RefOutputOptConstr1}
	\end{align}
	\label{Eqn:RefOutputOptimizationProblem}
\end{subequations}
\noindent\hspace*{-4pt}where ${I}_{x}(f_x) = -\int_{-\infty}^{\infty} f_y(y; f_x) \ln f_y(y; f_x) \text{d}y - \frac{1}{2} \ln(2\pi e \sigma^2)$ is the mutual information between signals $x$ and $y$ expressed as a function of $f_x$, i.e., we have $I_x(f_x) = I_s(f_s)$ in (\ref{Eqn:MutualInformation}) if the input pdf $f_s$ yields $f_x$.
Here, $\mathcal{F}_{x} = \{f_x \; \vert \; \mathcal{D}\{f_x\} \subseteq [0, \sqrt{\bar{P}_\text{max}}], \int_{x}f_x(x)\, \text{d}x = 1 \}$ denotes the feasible set of pdfs $f_x$ that correspond to the input pdfs $f_s \in \mathcal{F}_{s}$ in (\ref{Eqn:GeneralOptimizationProblem}).

We note that determining the optimal pdf $f_x^\text{opt}$ as solution of (\ref{Eqn:RefOutputOptimizationProblem}) may not be possible in closed form.
However, optimization problem (\ref{Eqn:RefOutputOptimizationProblem}) is convex, and hence, we can solve (\ref{Eqn:RefOutputOptimizationProblem}) numerically by discretizing the feasible set $\mathcal{F}_x$ \cite{Boyd2004}.
To this end, we define a uniform grid $\mathcal{X} = \{x_0, x_1, \cdots, x_{K-1}\}$ of size $K \in \mathbb{N}$, where $x_k = \frac{k}{K-1} \sqrt{\bar{P}_\text{max}}$.
Next, for $\mathcal{F}_x = \mathcal{X}$, we determine the discrete pdf $f_x^\text{opt}$ as solution of (\ref{Eqn:RefOutputOptimizationProblem}) utilizing CVX \cite{Grant2015, Morsi2019, Shanin2020}.
For the obtained pdf $f_x^\text{opt}$, we find the corresponding input pdf $f_s^\text{opt}(s) = \frac{\partial}{\partial s} F_x^\text{opt}(\hat{\psi} (|hs|^2))$, where $F^\text{opt}_x(x) = \int_{0}^x f^\text{opt}_x(\tilde{x}) \text{d}\tilde{x}$ is the corresponding cdf of $x$.

The procedure for determining the input pdf $f_s^\text{opt}$ solving (\ref{Eqn:GeneralOptimizationProblem}) is summarized in \textbf{Algorithm~\ref{Alg:OptimalSolution}}.
{We note that as the sampling grid size $K \to \infty$, the obtained discrete pdfs $f_x^\text{opt}$ and $f_s^\text{opt}$ converge to the optimal solutions of (\ref{Eqn:RefOutputOptimizationProblem}) and (\ref{Eqn:GeneralOptimizationProblem}), respectively \cite{Morsi2019}.
Furthermore, the computational complexity of \textbf{Algorithm~\ref{Alg:OptimalSolution}} is given by $\mathcal{O}(K^{3.5})$ \cite{Nesterov1994, Nocedal2006, Akle2015}, where $\mathcal{O}(\cdot)$ is the big-O notation.
Thus, if $K$ is large, the optimal solution of (\ref{Eqn:GeneralOptimizationProblem}) may entail a high computational complexity, which is not desirable in practical THz SWIPT systems \cite{Tataria2021, Sarieddeen2021}.
Therefore, in the following, as a suboptimal solution of (\ref{Eqn:GeneralOptimizationProblem}), we derive a low-complexity input signal distribution that maximizes the achievable information rate between the TX and RX.}

\begin{algorithm}[!t]	
	\small				
	\SetAlgoNoLine%
	Initialize: Peak signal amplitude $\bar{A}$, required average harvested power $\bar{P}_\text{\upshape req}$, EH model $\psi(\cdot)$, channel coefficient $h$, grid size $K$. \\	
	1. Find $\bar{P}_\text{\upshape max} = \max_{\rho \in [0, |h\bar{A}|^2 ] } \hat{\psi}(\rho)$.\\
	2. Discretize the feasible set $\mathcal{F}_x = \{x_0, x_1, \cdots, x_{K-1}\}$, where $x_k = \frac{k}{K-1} \sqrt{\bar{P}_\text{max}}$, $\forall k$ \\
	3. Determine the optimal pdf $f_x^\text{opt} \in \mathcal{F}_x$ as solution of (\ref{Eqn:RefOutputOptimizationProblem}) and find the corresponding cdf $F_x^\text{opt}$ and pdf $f_s^\text{opt}$ \\
	\textbf{Output:} Optimal input pdf $f_s^\text{opt}$
	\caption{\strut Algorithm for determining the input pdf $f_s^\text{opt}$. }
	\label{Alg:OptimalSolution}
\end{algorithm}

\subsection{Achievable Rate-Power Tradeoff}
\label{Section:AchievableTradeoff}
Since accurately determining an optimal input pdf $f_s^\text{opt}$ that solves (\ref{Eqn:GeneralOptimizationProblem}) is challenging in practice, in the following, for a given $\bar{A}$ and $\bar{P}_\text{\upshape req} \in [0, \bar{P}_\text{\upshape max})$, we obtain a low-complexity suboptimal solution of (\ref{Eqn:GeneralOptimizationProblem}).
To this end, in the following lemma, we first derive a lower bound on the maximum mutual information in (\ref{Eqn:RefOutputOptimizationProblem}).

\begin{lemma}
	For any values of ${ \bar{A} }$ and $\bar{P}_\text{\upshape req} \in \Big[0, \bar{P}_\text{\upshape max} \Big]$, the maximum mutual information as solution of (\ref{Eqn:RefOutputOptimizationProblem}) is lower-bounded by 
	\begin{equation}
		I_x(f_x^\text{\upshape opt}) = \max_{f_x \in \bar{\mathcal{F}}_{ x }} \, I_x(f_x) \geq \max_{f_x \in \bar{\mathcal{F}}_{ x }} \; J_x(f_x) \triangleq J^*_{\bar{\mathcal{F}}_{x} },
		\label{Eqn:MI_LowerBound}
	\end{equation}
	\noindent where $\bar{\mathcal{F}}_{ x } = \{f_x \, \vert \, f_x \in \mathcal{F}_{ x }, \bar{P}_{x}(f_x) \geq \bar{P}_\text{\upshape req} \}$ is the feasible set of problem (\ref{Eqn:RefOutputOptimizationProblem}).
	Furthermore, $J_x(f_x) = \frac{1}{2} \ln\big(1 + \frac{e^{2 h_x(f_x)}}{2\pi e \sigma^2 }\big)$ is an achievable information rate as function of $f_x(\cdot)$ and $h_x(f_x)~=~-\int f_x(x) \ln f(x) \text{\upshape d}x$ is the entropy of random variable $x$ for a given pdf $f_x(\cdot)$.
	\label{Lemma:EPI}
\end{lemma}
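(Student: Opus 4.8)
The plan is to establish the bound $I_x(f_x) \geq J_x(f_x)$ \emph{pointwise}---for every feasible $f_x$---and then to pass to the maximum over the shared feasible set $\bar{\mathcal{F}}_x$. First I would recognize the mutual information in (\ref{Eqn:RefOutputOptimizationProblem}) as a difference of differential entropies. Since $y = x + n$ with $n$ a zero-mean AWGN sample of variance $\sigma^2$ independent of $x$, the conditional entropy obeys $h(y \mid x) = h(n) = \tfrac{1}{2}\ln(2\pi e \sigma^2)$, so that $I_x(f_x) = h_y(f_x) - \tfrac{1}{2}\ln(2\pi e \sigma^2)$, where $h_y(f_x) = -\int f_y(y; f_x) \ln f_y(y; f_x)\,\text{d}y$ is the entropy of $y$; this is exactly the expression for $I_x$ stated after (\ref{Eqn:RefOutputOptimizationProblem}).

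The key step is to invoke the Entropy Power Inequality (EPI) for the independent sum $y = x + n$, which yields
\begin{equation}
	e^{2 h_y(f_x)} \geq e^{2 h_x(f_x)} + e^{2 h(n)} = e^{2 h_x(f_x)} + 2\pi e \sigma^2.
\end{equation}
Taking logarithms, substituting into the entropy-difference form of $I_x(f_x)$, and simplifying would give
\begin{equation}
	I_x(f_x) \geq \tfrac{1}{2}\ln\!\Big(1 + \tfrac{e^{2 h_x(f_x)}}{2\pi e \sigma^2}\Big) = J_x(f_x),
\end{equation}
which is precisely the desired pointwise bound.

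To finish, I would note that this inequality holds for \emph{every} $f_x \in \bar{\mathcal{F}}_x$, and in particular for the $J_x$-maximizer $f_x^J = \argmax_{f_x \in \bar{\mathcal{F}}_x} J_x(f_x)$. Because both optimizations in (\ref{Eqn:MI_LowerBound}) are carried out over the identical feasible set $\bar{\mathcal{F}}_x$, evaluating $I_x$ at $f_x^J$ is legitimate, and the chain $\max_{f_x \in \bar{\mathcal{F}}_x} I_x(f_x) \geq I_x(f_x^J) \geq J_x(f_x^J) = J^*_{\bar{\mathcal{F}}_x}$ closes without any gap, establishing (\ref{Eqn:MI_LowerBound}).

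I expect little genuine difficulty here, as the EPI does the heavy lifting; the only points requiring care are the verification of its hypotheses and the final optimization step. For the EPI I would check that $x$ and $n$ are independent (by the system model) and that both differential entropies are finite---$h(n)$ is the finite Gaussian entropy, while $h_x(f_x)$ is finite because the support of any $f_x \in \mathcal{F}_x$ is confined to the bounded interval $[0, \sqrt{\bar{P}_\text{max}}]$. The subtle observation is that, because the two maximizations range over the same feasible set, the bound transfers from the integrand level to the optimal-value level, which is what makes $J^*_{\bar{\mathcal{F}}_x}$ useful as a tractable surrogate for the subsequent closed-form signal design.
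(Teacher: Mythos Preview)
Your proposal is correct and follows essentially the same approach as the paper: write $I_x(f_x) = h_y(f_x) - \tfrac{1}{2}\ln(2\pi e \sigma^2)$, apply the Entropy Power Inequality to $y = x + n$, and then pass to the maximum over $\bar{\mathcal{F}}_x$. Your version is somewhat more explicit about the pointwise-to-maximum step and the finiteness hypotheses, but the argument is the same.
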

\begin{proof}
	The proof follows \cite[Appendix A]{Lapidoth2009}.
	Specifically, we express the maximum mutual information in  (\ref{Eqn:RefOutputOptimizationProblem}) as follows:
	\begin{align}
		I^*_{\bar{\mathcal{F}}_{x } } \triangleq \max_{f_x \in \bar{\mathcal{F}}_{ x }} \; I_x(f_x) = \max_{f_x \in \bar{\mathcal{F}}_{ x }} \; h_y(f_x) - h_{n} \geq \max_{f_x \in \bar{\mathcal{F}}_{ x }} \; \frac{1}{2} \ln \big( e^{2 h_x(f_x)} + e^{2 h_n} \big) - h_{n} =  J^*_{\bar{\mathcal{F}}_{x} },
	\end{align}
	\noindent where $h_{y}(f_x)$ and $h_{n} = \frac{1}{2} \ln (2 \pi e \sigma^2)$ are the differential entropies of $y$ for a given pdf $f_x \in \bar{\mathcal{F}}_{ x }$ and AWGN $n$ in (\ref{Eqn:MutualInformation}), respectively.
	This concludes the proof.
\end{proof}

Lemma~\ref{Lemma:EPI} shows that the maximum mutual information $I_x(f_x^\text{opt})$ as solution of (\ref{Eqn:RefOutputOptimizationProblem}) can be lower-bounded by the maximum achievable information rate $J^*_{\bar{\mathcal{F}}_{x} }$, which, in turn, is obtained as a solution of the optimization problem in (\ref{Eqn:MI_LowerBound}).
In the following, as a suboptimal solution of (\ref{Eqn:RefOutputOptimizationProblem}), for given ${ \bar{A} }$ and $\bar{P}_\text{\upshape req} \in [0, \bar{P}_\text{\upshape max} )$, we determine the pdfs $f^\text{ach}_x$ and $f^\text{ach}_s$ of random variables $x$ and $s$, respectively, that yield the maximum achievable rate  $J^*_{\bar{\mathcal{F}}_{x} }$.
First, in the following proposition, we show that for small required average harvested powers $\bar{P}_\text{req}$, constraint (\ref{Eqn:RefOutputOptConstr1}) in the definition of $\bar{\mathcal{F}}_{ x }$ can be relaxed and the pdf $f_x^\text{ach}$ can be obtained in closed form.
\begin{proposition}
	\label{Prop:UniformDistribution}
	For a given $\bar{A} = \min \{A, \frac{ \sqrt{\rho_\text{\upshape max}} }{|h|}\}$ and required average harvested power satisfying $\bar{P}_\text{\upshape req} \in [0, \frac{1}{3} \bar{P}_\text{\upshape max}],$ the maximum achievable information rate is given by $J^*_{\bar{\mathcal{F}}_{x}} =  \frac{1}{2} \ln\big(1 + \frac{ \bar{P}_\text{\upshape max}}{2\pi e \sigma^2 }\big)$ and the corresponding pdf of $x$ is 
	\begin{equation}
		f^\text{\upshape ach}_x(x) = \frac{1}{\sqrt{\bar{P}_\text{\upshape max}}}, \quad x \in [0, \sqrt{\bar{P}_\text{\upshape max}}].
		\label{Eqn:UniformDistributionX}
	\end{equation}
\end{proposition}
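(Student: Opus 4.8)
The plan is to exploit the structure of the achievable rate: since $J_x(f_x) = \frac{1}{2}\ln\big(1 + \frac{e^{2h_x(f_x)}}{2\pi e \sigma^2}\big)$ depends on the input pdf $f_x$ only through the differential entropy $h_x(f_x)$, and since the map $t \mapsto \frac{1}{2}\ln\big(1 + \frac{e^{2t}}{2\pi e \sigma^2}\big)$ is strictly increasing, maximizing $J_x$ over the feasible set $\bar{\mathcal{F}}_x$ is equivalent to maximizing the entropy $h_x(f_x)$ over the same set. First I would therefore replace the rate-maximization in (\ref{Eqn:MI_LowerBound}) by an entropy-maximization, and then \emph{relax} it by dropping the average-power constraint (\ref{Eqn:RefOutputOptConstr1}), retaining only the support and normalization conditions that define $\mathcal{F}_x$, i.e., $\mathcal{D}\{f_x\} \subseteq [0, \sqrt{\bar{P}_\text{max}}]$ and $\int_x f_x(x)\,\text{d}x = 1$.

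Next I would invoke the standard maximum-entropy result for bounded support: among all pdfs supported on $[0, L]$, the differential entropy is uniquely maximized by the uniform distribution, attaining the value $h_x = \ln L$. With $L = \sqrt{\bar{P}_\text{max}}$ this yields the uniform pdf (\ref{Eqn:UniformDistributionX}) and $h_x = \frac{1}{2}\ln \bar{P}_\text{max}$, hence $e^{2h_x} = \bar{P}_\text{max}$. Substituting into $J_x$ gives the relaxed optimal value $\frac{1}{2}\ln\big(1 + \frac{\bar{P}_\text{max}}{2\pi e \sigma^2}\big)$, which serves as an upper bound on the constrained maximum $J^*_{\bar{\mathcal{F}}_x}$.

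The final and decisive step is to show the relaxation is tight, i.e., that the uniform maximizer is actually feasible for the original problem when $\bar{P}_\text{req} \le \frac{1}{3}\bar{P}_\text{max}$. For this I would compute the second moment directly, $\bar{P}_x(f^\text{ach}_x) = \int_0^{\sqrt{\bar{P}_\text{max}}} x^2 \tfrac{1}{\sqrt{\bar{P}_\text{max}}}\,\text{d}x = \frac{1}{3}\bar{P}_\text{max}$, so that the uniform pdf satisfies (\ref{Eqn:RefOutputOptConstr1}) precisely for $\bar{P}_\text{req} \le \frac{1}{3}\bar{P}_\text{max}$. Since the relaxed optimum upper-bounds the constrained one and is attained at a point that is feasible for the constrained problem, the two optima coincide, which establishes both the claimed value of $J^*_{\bar{\mathcal{F}}_x}$ and its attainment by $f^\text{ach}_x$.

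I expect the only genuinely substantive point to be this feasibility check in the last step: the entire proposition hinges on the second moment of the uniform distribution being exactly $\frac{1}{3}\bar{P}_\text{max}$, which is precisely the threshold appearing in the hypothesis and explains why the constraint may be discarded in this regime. Everything else reduces to the monotonicity of the outer logarithm and the textbook bounded-support maximum-entropy fact.
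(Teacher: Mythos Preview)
Your proposal is correct and follows essentially the same route as the paper: relax the power constraint, invoke the bounded-support maximum-entropy result to obtain the uniform pdf, and then verify feasibility via the second-moment calculation $\mathbb{E}_x\{x^2\} = \tfrac{1}{3}\bar{P}_\text{max}$. The paper's argument is identical in substance, only slightly less explicit about the monotonicity and the second-moment integral.
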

\begin{proof}
	First, we note that if the average power constraint (\ref{Eqn:RefOutputOptConstr1}) in the definition of $\bar{\mathcal{F}}_{x}$ is not present, the differential entropy $h_x(f_x)$, and hence, function $J_x(f_x)$ with $f_x \in \bar{\mathcal{F}}_{x}$ are maximized if the pdf of $x$ is uniform and given by $f_x^\text{ach}$ \cite{Lapidoth2009}.
	Furthermore, in this case, the maximum achievable information rate and the average harvested power can be expressed as $J_x(f_x) = J^*_{\bar{\mathcal{F}}_{x}}$ and $\bar{P}_\text{\upshape req} = \mathbb{E}_x\{x^2\} = \frac{1}{3} \bar{P}_\text{\upshape max}$, respectively.
	Thus, for $\bar{P}_\text{\upshape req} \in [0, \frac{1}{3} \bar{P}_\text{\upshape max}]$, for the maximization of $J_x(\cdot)$, constraint (\ref{Eqn:RefOutputOptConstr1}) can be relaxed and the optimal distribution of $x$ that yields $J^*_{\bar{\mathcal{F}}_{ x }}$ is given by (\ref{Eqn:UniformDistributionX}).
	This concludes the proof.
\end{proof}

Proposition~\ref{Prop:UniformDistribution} reveals that if the required average harvested power $\bar{P}_\text{req}$ is low, there is no tradeoff between the achievable information rate and the average harvested power and the corresponding $J^*_{\bar{\mathcal{F}}_{ x }}$, $f_x^\text{ach}$, and thus, $f_s^\text{ach}$ can be computed in closed form.
In the following, we derive the output pdf $f_y^\text{ach}$ in (\ref{Eqn:OutputPdf}) when the pdf of $x$ is given by (\ref{Eqn:UniformDistributionX}). 
\begin{corollary}
	\label{Corollary:Uniform}
	If random variable $x$ has pdf $f^\text{\upshape ach}_x$ in (\ref{Eqn:UniformDistributionX}), the output pdf in (\ref{Eqn:OutputPdf}) is given by
	\begin{align}
		f^\text{\upshape ach}_y(y; f_x^\text{\upshape ach}) &= \frac{1}{\sqrt{\bar{P}_\text{\upshape max}}} \Big[ Q\Big(\frac{y-\sqrt{\bar{P}_\text{\upshape max}}}{\sigma}\Big) - Q\Big(\frac{y}{\sigma}\Big) \Big].
		\label{Eqn:AchOutputPdf_Uniform}
	\end{align}
\end{corollary}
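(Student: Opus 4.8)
The plan is to evaluate the convolution in (\ref{Eqn:OutputPdf}) directly, exploiting the fact that $f_x^\text{ach}$ in (\ref{Eqn:UniformDistributionX}) is constant on a bounded support. First I would insert $f_x^\text{ach}$ together with the Gaussian noise pdf $f_n(y-x) = \frac{1}{\sqrt{2\pi\sigma^2}}\exp\!\big(-\frac{(y-x)^2}{2\sigma^2}\big)$ into (\ref{Eqn:OutputPdf}). Because $f_x^\text{ach}(x) = 1/\sqrt{\bar{P}_\text{max}}$ for $x \in [0,\sqrt{\bar{P}_\text{max}}]$ and vanishes elsewhere, the integration range shrinks from $(-\infty,\infty)$ to $[0,\sqrt{\bar{P}_\text{max}}]$, and the constant factor $1/\sqrt{\bar{P}_\text{max}}$ pulls out of the integral, leaving a Gaussian integrated over a finite window.

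Next I would normalize the Gaussian kernel through the substitution $t = (y-x)/\sigma$, for which $\text{d}x = -\sigma\,\text{d}t$ and the endpoints $x = 0$ and $x = \sqrt{\bar{P}_\text{max}}$ map to $t = y/\sigma$ and $t = (y-\sqrt{\bar{P}_\text{max}})/\sigma$, respectively. After flipping the orientation of the interval to absorb the minus sign and using $\frac{\sigma}{\sqrt{2\pi\sigma^2}} = \frac{1}{\sqrt{2\pi}}$, the integrand becomes the standard normal density $\frac{1}{\sqrt{2\pi}}e^{-t^2/2}$ integrated from $(y-\sqrt{\bar{P}_\text{max}})/\sigma$ to $y/\sigma$. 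Writing this finite integral as a difference of tail probabilities via the identity $\int_a^b \frac{1}{\sqrt{2\pi}}e^{-t^2/2}\,\text{d}t = Q(a) - Q(b)$, with $Q(\cdot)$ the standard Gaussian $Q$-function, yields exactly (\ref{Eqn:AchOutputPdf_Uniform}).

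The computation is essentially routine, so the only point requiring care is the bookkeeping of the substitution: correctly mapping the endpoints and fixing the sign and orientation so that the smaller argument $(y-\sqrt{\bar{P}_\text{max}})/\sigma$ ends up in the first $Q$-term. As a sanity check I would verify that the resulting $f_y^\text{ach}$ is non-negative and integrates to one over $y \in \mathbb{R}$; non-negativity is immediate since $Q(\cdot)$ is monotonically decreasing and $y/\sigma > (y-\sqrt{\bar{P}_\text{max}})/\sigma$, so the bracketed difference is always positive.
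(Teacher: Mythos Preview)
Your proposal is correct and follows essentially the same approach as the paper: both substitute the uniform pdf and Gaussian noise pdf into the convolution (\ref{Eqn:OutputPdf}), restrict the integration to $[0,\sqrt{\bar{P}_\text{max}}]$, factor out $1/\sqrt{\bar{P}_\text{max}}$, and identify the remaining Gaussian integral as a difference of $Q$-functions. Your version is simply more explicit about the substitution and the sign/endpoint bookkeeping, which the paper absorbs into a single remark.
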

\begin{proof}
	First, we express the pdf $f_y^\text{ach}$ of the output signal $y = x+n$ in (\ref{Eqn:OutputPdf}) as follows:
	\begin{align}
		f_y^\text{ach}(y; f_x^\text{ach}) =\int_{-\infty}^{\infty} f^\text{ach}_x(x) f_n(y-x) \text{d} x = \int_{0}^{ \sqrt{\bar{P}_\text{\upshape max}} } \frac{1}{\sqrt{\bar{P}_\text{max}}} f_n(y-x) \text{d} x.
		\label{Eqn:Prop5Eqn1}
	\end{align}
	Since $n$ is a zero-mean Gaussian random variable with variance $\sigma^2$, we have $\int_{0}^{\sqrt{\bar{P}_\text{\upshape max}}} f_n(y-x) \text{d} x = Q\big(\frac{y-\sqrt{\bar{P}_\text{\upshape max}}}{\sigma}\big) - Q\big(\frac{y}{\sigma}\big) $.
	This concludes the proof.
\end{proof}

In the next proposition, we consider the case where $\bar{P}_\text{req} \in [\frac{1}{3} \bar{P}_\text{\upshape max}, \bar{P}_\text{\upshape max})$ and characterize the corresponding maximum achievable information rate $J^*_{\bar{\mathcal{F}}_{ x }}$ and $f_x^\text{ach}$.
\begin{proposition}
	For a given $\bar{A} = \min \{A, \frac{ \sqrt{\rho_\text{\upshape max}} }{|h|}\}$ and a required average harvested power $\bar{P}_\text{\upshape req} \in [\frac{1}{3} \bar{P}_\text{\upshape max}, \bar{P}_\text{\upshape max})$, the maximum achievable information rate $J^*_{\bar{\mathcal{F}}_{ x }}$ in (\ref{Eqn:MI_LowerBound}) is given by 
	\begin{equation}
		J^*_{\bar{\mathcal{F}}_{ x }} = \frac{1}{2} \ln\big(1 + \frac{e^{2 \mu_0 - 2\mu_1^2 \frac{\bar{P}_\text{\upshape req}}{\bar{P}_\text{\upshape max}} }}{2\pi e \sigma^2 }\big).
		\label{Eqn:OptMI}
	\end{equation}
	\noindent Furthermore, the corresponding pdf of $x$ that solves (\ref{Eqn:RefOutputOptimizationProblem}) can be expressed as
	\begin{equation}
		f^\text{\upshape ach}_x(x) = \exp\Big(- {\mu}_0 + {\mu}^2_1 \frac{x^2}{\bar{P}_\text{\upshape max}} \Big), \quad x\in [0, \sqrt{\bar{P}_\text{\upshape max}} ],
		\label{Eqn:PdfxAchievable}
	\end{equation} 
	\noindent\hspace*{-1pt}where $\mu_0 = \mu_1^2 + \ln\Big(\frac{ \sqrt{\bar{P}_\text{\upshape max}} }{1 + 2 \mu_1^2  \frac{\bar{P}_\text{\upshape req}}{\bar{P}_\text{\upshape max}}  }\Big)$. 
	Here, $\mu_1 \in \mathbb{R}_{+}$ is the solution of the following equation:
	\begin{equation}
		\frac{\bar{P}_\text{\upshape req}}{\bar{P}_\text{\upshape max}} = \frac{ \exp( \mu_1^2) }{ \sqrt{\pi} \mu_1 \text{\upshape Ei}( \mu_1 ) } - \frac{1}{ 2 \mu_1^2  }
		\label{Eqn:OptimalCoefMu3}
	\end{equation} 			
	\noindent with imaginary error function $\text{\upshape Ei}(\cdot)$.
	\label{Prop:OptimalSolution}
\end{proposition}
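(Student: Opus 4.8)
The plan is to exploit the fact that, by Lemma~\ref{Lemma:EPI}, the achievable rate $J_x(f_x) = \frac{1}{2}\ln\big(1 + \frac{e^{2 h_x(f_x)}}{2\pi e \sigma^2}\big)$ is a strictly increasing function of the differential entropy $h_x(f_x)$. Hence maximizing $J_x$ over $\bar{\mathcal{F}}_x$ in (\ref{Eqn:MI_LowerBound}) is equivalent to the maximum-entropy problem of maximizing $h_x(f_x)$ subject to the support constraint $\mathcal{D}\{f_x\}\subseteq[0,\sqrt{\bar{P}_\text{max}}]$, the normalization $\int f_x\,\mathrm{d}x = 1$, and the second-moment constraint $\bar{P}_x(f_x)=\int x^2 f_x\,\mathrm{d}x \geq \bar{P}_\text{req}$. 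Since $h_x$ is concave in $f_x$ and all constraints are linear in $f_x$, this is a convex program, so any stationary point of the associated Lagrangian is globally optimal and the calculus of variations is justified.

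First I would argue that the power constraint is active for $\bar{P}_\text{req} \geq \frac{1}{3}\bar{P}_\text{max}$: the unconstrained maximum-entropy density on $[0,\sqrt{\bar{P}_\text{max}}]$ is the uniform pdf $f_x^\text{ach}$ in (\ref{Eqn:UniformDistributionX}), whose second moment equals exactly $\frac{1}{3}\bar{P}_\text{max}$ by direct integration; for $\bar{P}_\text{req}$ above this value the uniform density is infeasible and constraint (\ref{Eqn:RefOutputOptConstr1}) must bind. I would then form the Lagrangian $h_x(f_x) + \lambda_0\big(\int f_x\,\mathrm{d}x - 1\big) + \lambda_1\big(\int x^2 f_x\,\mathrm{d}x - \bar{P}_\text{req}\big)$ and set its functional derivative to zero, yielding $-\ln f_x - 1 + \lambda_0 + \lambda_1 x^2 = 0$, i.e.\ the Gibbs form $f_x(x) = \exp(\lambda_0 - 1 + \lambda_1 x^2)$. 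The KKT multiplier of an active $\geq$ constraint is nonnegative, so $\lambda_1 \geq 0$; reparametrizing $\lambda_1 = \mu_1^2/\bar{P}_\text{max}$ and $\lambda_0 - 1 = -\mu_0$ gives exactly the density (\ref{Eqn:PdfxAchievable}) supported on $[0,\sqrt{\bar{P}_\text{max}}]$.

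The parameters $\mu_0,\mu_1$ are then fixed by the two active constraints. Substituting $u = x/\sqrt{\bar{P}_\text{max}}$, the normalization $e^{-\mu_0}\sqrt{\bar{P}_\text{max}}\int_0^1 e^{\mu_1^2 u^2}\,\mathrm{d}u = 1$ is expressed through the imaginary error function via $\int_0^1 e^{\mu_1^2 u^2}\,\mathrm{d}u = \frac{\sqrt{\pi}}{2\mu_1}\mathrm{Ei}(\mu_1)$, while the second-moment integral $\int_0^1 u^2 e^{\mu_1^2 u^2}\,\mathrm{d}u$ is evaluated by integration by parts as $\frac{e^{\mu_1^2}}{2\mu_1^2} - \frac{\sqrt{\pi}\,\mathrm{Ei}(\mu_1)}{4\mu_1^3}$. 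Dividing the power constraint by the normalization eliminates $e^{-\mu_0}$ and yields the transcendental equation (\ref{Eqn:OptimalCoefMu3}) for $\mu_1$; rearranging the normalization and using (\ref{Eqn:OptimalCoefMu3}) to substitute $\mathrm{Ei}(\mu_1)$ then reproduces the stated closed form for $\mu_0$.

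Finally, I would compute the optimal entropy directly: inserting $\ln f_x^\text{ach}(x) = -\mu_0 + \mu_1^2 x^2/\bar{P}_\text{max}$ into $h_x(f_x^\text{ach}) = -\int f_x^\text{ach}\ln f_x^\text{ach}\,\mathrm{d}x$ and invoking the normalization together with the active constraint $\int x^2 f_x^\text{ach}\,\mathrm{d}x = \bar{P}_\text{req}$ collapses the entropy to $h_x(f_x^\text{ach}) = \mu_0 - \mu_1^2 \bar{P}_\text{req}/\bar{P}_\text{max}$; substituting into $J_x$ yields (\ref{Eqn:OptMI}). I expect the main obstacle to be rigour rather than calculation: one must verify that a solution $\mu_1\in\mathbb{R}_+$ of (\ref{Eqn:OptimalCoefMu3}) exists and is unique on the range $\bar{P}_\text{req}\in[\frac{1}{3}\bar{P}_\text{max},\bar{P}_\text{max})$, which reduces to checking that the right-hand side of (\ref{Eqn:OptimalCoefMu3}) is a strictly increasing bijection of $\mu_1$ onto this interval, with the limit $\mu_1\to 0$ recovering the threshold $\frac{1}{3}$ and $\mu_1\to\infty$ approaching $1$, and that the resulting density indeed retains full support on $[0,\sqrt{\bar{P}_\text{max}}]$.
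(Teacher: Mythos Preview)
Your proposal is correct and follows essentially the same route as the paper's proof: reduce the maximization of $J_x$ to a maximum-entropy problem via the monotonicity of $J_x$ in $h_x$, invoke the KKT conditions to obtain the Gibbs form (\ref{Eqn:PdfxAchievable}), and then compute $h_x(f_x^\text{ach}) = \mu_0 - \mu_1^2\,\bar{P}_\text{req}/\bar{P}_\text{max}$ to recover (\ref{Eqn:OptMI}). If anything, you supply more detail than the paper---the explicit activation argument for the second-moment constraint, the integration-by-parts evaluation leading to (\ref{Eqn:OptimalCoefMu3}), and the existence/uniqueness check for $\mu_1$ are all left implicit or deferred to a later footnote in the paper.
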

\begin{proof}
	Please refer to Appendix~\ref{Appendix:OptimalSolution}.
\end{proof}

Proposition~\ref{Prop:OptimalSolution} shows that if $\bar{P}_\text{\upshape req} \in [\frac{1}{3} \bar{P}_\text{\upshape max}, \bar{P}_\text{\upshape max})$, there is a tradeoff between the maximum achievable information rate and the average harvested power.
Furthermore, for given ${ \bar{A} }$ and $\bar{P}_\text{\upshape req}$, the pdf $f_x^\text{ach}$ and the maximum achievable information rate $J^*_{\bar{\mathcal{F}}_{ x }}$ depend on the power ratio $\frac{\bar{P}_\text{\upshape req}}{\bar{P}_\text{\upshape max}}$ in (\ref{Eqn:OptimalCoefMu3}).
The pdfs $f_x^\text{ach}$ obtained for different power ratios $\frac{\bar{P}_\text{req}}{\bar{P}_\text{max}}$ are shown in Fig.~\ref{Fig:PdfsX}.
In the following corollary, we provide the output pdf of $y$ in (\ref{Eqn:OutputPdf}) when the signal $x$ follows the pdf in (\ref{Eqn:PdfxAchievable}).

\begin{corollary}
	\label{Corollary:Optimal}
	If random variable $x$ follows the distribution with pdf $f^\text{\upshape ach}_x$ in (\ref{Eqn:PdfxAchievable}), the output pdf $f_y^\text{\upshape ach}$ can be expressed as follows:
	\begin{equation}
		\begin{aligned}
			f^\text{\upshape ach}_y(y; f^\text{\upshape ach}_x) = \frac{1}{\mu_4} \exp\Big( \frac{y^2}{\bar{P}_\text{\upshape max}} \frac{\mu_1^2}{\mu_4^2} - \mu_0\Big) \Big[ Q\Big(\frac{y - \mu_4^2 \sqrt{\bar{P}_\text{\upshape max}}}{\mu_4 \sigma}  \Big) - Q\Big(\frac{y}{\mu_4 \sigma}\Big) \Big]
		\end{aligned}
		\label{Eqn:OutputPdfAchievable}
	\end{equation}
	\noindent with $\mu_4 = \sqrt{1 - 2 \frac{\mu_1^2}{\bar{P}_\text{\upshape max}} \sigma^2}$.
\end{corollary}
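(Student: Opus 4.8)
The plan is to evaluate the convolution in (\ref{Eqn:OutputPdf}) directly, inserting the achievable input pdf $f_x^\text{ach}$ from (\ref{Eqn:PdfxAchievable}) and the Gaussian noise pdf $f_n$. Since $y = x + n$ and $x$ is supported on $[0,\sqrt{\bar{P}_\text{max}}]$, I would start from
\begin{equation}
	f_y^\text{ach}(y) = \frac{1}{\sqrt{2\pi\sigma^2}} \int_0^{\sqrt{\bar{P}_\text{max}}} \exp\Big(-\mu_0 + \mu_1^2 \frac{x^2}{\bar{P}_\text{max}} - \frac{(y-x)^2}{2\sigma^2}\Big) \text{d}x,
\end{equation}
so that the whole corollary reduces to computing a single Gaussian integral over a finite interval. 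The strategy is then entirely standard: collect the exponent as a quadratic in $x$, complete the square, and express the truncated Gaussian integral through the $Q$-function.

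The key algebraic step is to expand $(y-x)^2$ and group the terms by powers of $x$. The coefficient of $x^2$ becomes $\frac{\mu_1^2}{\bar{P}_\text{max}} - \frac{1}{2\sigma^2}$, which, by the definition $\mu_4 = \sqrt{1 - 2\frac{\mu_1^2}{\bar{P}_\text{max}}\sigma^2}$, is exactly $-\frac{\mu_4^2}{2\sigma^2}$. Completing the square in $x$ then recenters the integrand at $m = y/\mu_4^2$ with effective standard deviation $s = \sigma/\mu_4$ and splits off a factor depending only on $y$. Collecting the $y^2$ contributions and simplifying with $1 - \mu_4^2 = 2\frac{\mu_1^2}{\bar{P}_\text{max}}\sigma^2$ produces precisely the prefactor $\exp\big(\frac{y^2}{\bar{P}_\text{max}}\frac{\mu_1^2}{\mu_4^2} - \mu_0\big)$ appearing in (\ref{Eqn:OutputPdfAchievable}).

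What remains is the finite Gaussian integral, which I would evaluate with the identity $\int_a^b \exp\big(-\frac{(x-m)^2}{2s^2}\big)\text{d}x = s\sqrt{2\pi}\big[Q\big(\frac{a-m}{s}\big) - Q\big(\frac{b-m}{s}\big)\big]$ for $a = 0$, $b = \sqrt{\bar{P}_\text{max}}$, $m = y/\mu_4^2$, and $s = \sigma/\mu_4$. The normalization constants then combine as $\frac{1}{\sqrt{2\pi\sigma^2}}\cdot\frac{\sigma}{\mu_4}\sqrt{2\pi} = \frac{1}{\mu_4}$, matching the claimed leading factor. The final step is cosmetic: the arguments produced by the integral are $\frac{-y}{\mu_4\sigma}$ and $\frac{\mu_4^2\sqrt{\bar{P}_\text{max}}-y}{\mu_4\sigma}$, which I would recast into the stated form using the reflection identity $Q(-z) = 1 - Q(z)$; the additive constants cancel and the bracket collapses to $Q\big(\frac{y-\mu_4^2\sqrt{\bar{P}_\text{max}}}{\mu_4\sigma}\big) - Q\big(\frac{y}{\mu_4\sigma}\big)$, completing the derivation of (\ref{Eqn:OutputPdfAchievable}).

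I expect no genuine obstacle here, as the computation is routine bookkeeping of Gaussian constants; the only point requiring care is the sign of the $x^2$ coefficient. The completing-the-square manipulation and the truncated-integral identity presuppose that the effective variance $s^2 = \sigma^2/\mu_4^2$ is positive, i.e., $\mu_4^2 > 0$, equivalently $\sigma^2 < \frac{\bar{P}_\text{max}}{2\mu_1^2}$. I would note that this condition holds in the operating regime of interest (small noise power), so that $\mu_4$ is real and the $Q$-function representation is valid; everything else follows by direct substitution.
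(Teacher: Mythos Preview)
Your proposal is correct and follows exactly the same approach as the paper's proof: write the convolution integral of $f_x^\text{ach}$ with the Gaussian noise pdf, complete the square in $x$, and express the resulting truncated Gaussian integral through the $Q$-function. The paper's appendix in fact states only the starting integral and the final expression, omitting the intermediate algebra that you spell out, so your version is a strictly more detailed rendering of the same computation.
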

\begin{proof}
	To obtain (\ref{Eqn:OutputPdfAchievable}), we express the output pdf $f_y^\text{ach}(y; f^\text{ach}_x)$ as follows:
	\begin{align}
		f_y^\text{ach}(y; f^\text{ach}_x) &= \int_{0}^{\sqrt{\bar{P}_\text{max}}} f^\text{ach}_x(x) f_n(y-x) \text{d}x \nonumber \\ 
		&= \int_{0}^{\sqrt{\bar{P}_\text{max}}} \frac{1}{\sqrt{2 \pi \sigma^2}} \exp \Big( - \mu_0 + \mu^2_1 \frac{x^2}{\bar{P}_\text{\upshape max}}-\frac{ (y-x)^2}{2 \sigma^2} \Big) \text{d}x \nonumber\\ 
		&= \mu_4^{-1} \exp\Big( \frac{y^2}{\bar{P}_\text{\upshape max}} \frac{\mu_1^2}{\mu_4^2} - \mu_0\Big) \Big[ Q\Big(\frac{y - \mu_4^2 \sqrt{\bar{P}_\text{\upshape max}}}{\mu_4 \sigma}  \Big) - Q\Big(\frac{y}{\mu_4 \sigma}\Big) \Big]
	\end{align}
	This concludes the proof.
\end{proof}

\begin{figure}[!t]
	\centering
	\includegraphics[draft = false, width=0.5\textwidth]{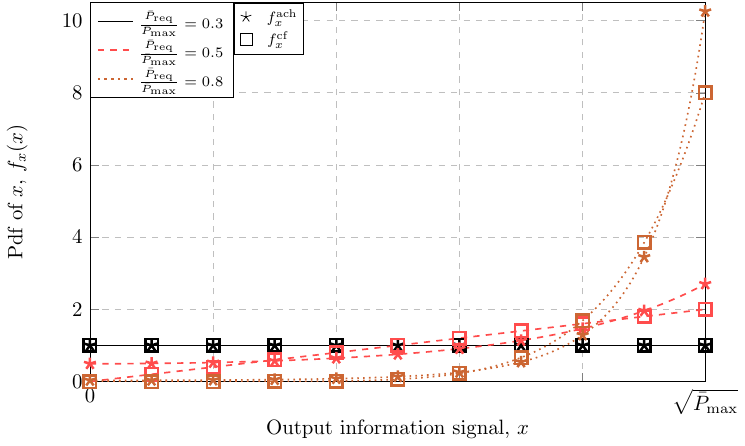}
	\caption{The pdfs $f_x^\text{ach}$ of $x$ in (\ref{Eqn:UniformDistributionX}) and (\ref{Eqn:PdfxAchievable}) and proposed closed-form pdfs $f_x^\text{cf}(x)$ in (\ref{Eqn:ClosedFormPdfX}) for different power ratios $\frac{\bar{P}_\text{req}}{\bar{P}_\text{max}}$.}
	\label{Fig:PdfsX}
\end{figure}

We note that if $\bar{P}_\text{\upshape req} \in [\frac{1}{3} \bar{P}_\text{\upshape max}, \bar{P}_\text{\upshape max})$, the maximum information rate $J^*_{\bar{\mathcal{F}}_{ x }} $, the pdf $f_x^\text{ach}$, and hence, $f_s^\text{ach}$ and mutual information $I_s(f^\text{\upshape ach}_s)$ require $\mu_1$ as solution of (\ref{Eqn:OptimalCoefMu3}), which, in general, can not be obtained in closed form.
{Therefore, in the following, we propose an iterative algorithm based on the bisection method for determining $\mu_1$.
The bisection approach is an iterative procedure that can be utilized to find a zero of the non-linear continuous function $g(\mu_1)-\frac{\bar{P}_\text{\upshape req}}{\bar{P}_\text{\upshape max}}$ inside a closed interval \cite{Burden1993}.
We note that function $g(\mu_1) = \frac{ \exp( \mu_1^2) }{ \sqrt{\pi} \mu_1 \text{\upshape Ei}( \mu_1 ) } - \frac{1}{ 2 \mu_1^2  }$ is monotonically increasing\footnotemark\hspace*{0pt} in its domain $\mathcal{D}\{g\} = \mathbb{R}_{+}$ and ${g}(\mu_1) \in [\frac{1}{3}, 1)$.
\footnotetext{The monotonicity of $g(\cdot)$ can be verified numerically by examining the first-order derivative $g'(\mu) = \frac{1}{\mu^3} + \frac{2 \exp(\mu^2)}{\sqrt{\pi} \text{Ei}(\mu)} \big[ 1 - \frac{1}{2\mu^2}  - \frac{\exp(\mu^2)}{\sqrt{\pi} \mu \text{Ei}(\mu)}] \geq 0, \forall \mu \in \mathbb{R}_{+}$.}}

First, we find a lower and an upper bound, $\mu_\text{lo}$ and $\mu_\text{up}$, of the initial interval for the bisection algorithm, such that $g(\mu_\text{lo}) \leq \frac{\bar{P}_\text{\upshape req}}{\bar{P}_\text{\upshape max}}$ and $g(\mu_\text{up}) \geq \frac{\bar{P}_\text{\upshape req}}{\bar{P}_\text{\upshape max}}$.
To this end, we create a lookup table comprising an array $\mathcal{M} = \{ \mu^1, \mu^2, \cdots \mu^{N_\mathcal{M}} \} \in \mathbb{R}_{+}^{N_\mathcal{M}}$, where $\mu^1 = 0$ and $\mu^i > \mu^j, \forall i > j$, of size $N_\mathcal{M}$ and a corresponding array $\mathcal{G} = \{g^1, g^2, \cdots, g^{N_\mathcal{M}} \}$ with $g^n = g(\mu^n), n \in \{1,2,\cdots, N_\mathcal{M}\}$.
We identify\footnotemark\hspace*{0pt} the array index $n_\text{up} = \min \{n \,\vert\, g^n \geq \frac{\bar{P}_\text{\upshape req}}{\bar{P}_\text{\upshape max}}\}$ and set $\mu_\text{up} = \mu^{n_\text{up}}$ and $\mu_\text{lo} = \mu^{n_\text{up} - 1}$.
\footnotetext{We note that $g(\mu) \to 1$ when $\mu \to \infty$. Therefore, if such $n_\text{up} \in \{1,2,\cdots, N_\mathcal{M}\}$ does not exist, we obtain the pdf of $x$ as $f_x^\text{ach} = \delta (x - \sqrt{\bar{P}_\text{max}})$. }

{Then, we utilize the bisection algorithm to find the unique parameter $\mu^*_1 \in [\mu_\text{lo}, \mu_\text{up}]$ that yields $g(\mu^*_1) = \frac{\bar{P}_\text{\upshape req}}{\bar{P}_\text{\upshape max}}$.
In particular, in iteration $p \geq 1$, we calculate the mid point of the interval, $\mu^\text{c} = \frac{\mu_\text{lo} + \mu_\text{up}}{2}$.
We compare the value of function $g(\cdot)$ at the mid point with power ratio $\frac{\bar{P}_\text{\upshape req}}{\bar{P}_\text{\upshape max}}$ and we set $\mu_\text{up} = \mu^\text{c} $ if $g(\mu^\text{c}) > \frac{\bar{P}_\text{\upshape req}}{\bar{P}_\text{\upshape max}}$ and $\mu_\text{lo} = \mu^\text{c} $, otherwise.
Finally, if $| \mu_\text{up} - \mu_\text{lo}| \leq \epsilon$, where $\epsilon$ is a predefined desired accuracy, we set $\mu_1 = \frac{\mu_\text{up} + \mu_\text{lo}}{2}$ and stop the bisection search.
We note that the obtained value satisfies $\vert \mu_1 - \mu_1^* \vert \leq \epsilon$, and thus, the bisection algorithm converges to the unique solution $\mu^*_1$ of the non-linear equation $g(\mu^*_1) = \frac{\bar{P}_\text{\upshape req}}{\bar{P}_\text{\upshape max}}$.}
With the determined parameter $\mu_1$, we find $\mu_0$ and $f_x^\text{ach}$ in Proposition~\ref{Prop:OptimalSolution}.
The corresponding input pdf is given by $f_s^\text{ach}(s) = \frac{\partial}{\partial s} F_x^\text{ach}(\hat{\psi} (|hs|^2))$, where $F_x^\text{ach}(x) = \int_{0}^x f_x^\text{ach}(\tilde{x}) \text{d}\tilde{x} $ is the cdf of $x$.
Similar to $f_s^\text{opt}$, pdf $f_s^\text{ach}$ exists and is unique due to the monotonicity of the equivalent EH model $\hat{\psi}(\cdot)$.

\begin{algorithm}[!t]	
	\small				
	\SetAlgoNoLine%
	Initialize: Peak amplitude $\bar{A}$, required average harvested power $\bar{P}_\text{\upshape req}$, EH model $\psi(\cdot)$, channel coefficient $h$, lookup table arrays $\mathcal{M}$ and $\mathcal{G}$, and accuracy $\epsilon$. \\	
	1. Find $\bar{P}_\text{\upshape max} = \max_{\rho \in [0, |h\bar{A}|^2 ] } \hat{\psi}(\rho)$\\
	\uIf{$\bar{P}_\text{\upshape req} \in [0, \frac{1}{3} \bar{P}_\text{\upshape max}]$}
	{2. Set pdf $f_x^\text{ach}(x) = \frac{1}{\sqrt{\bar{P}_\text{\upshape max}}}$ with $x \in [0, \sqrt{\bar{P}_\text{\upshape max}}]$\\}
	\uElse
	{3. Identify $n_\text{up} = \min \{n \,\vert\, g^n \geq \frac{\bar{P}_\text{\upshape req}}{\bar{P}_\text{\upshape max}}\}$, where $g^n \in \mathcal{G}, \forall n$, and set $\mu_\text{up} = \mu^{n_\text{up}} \in \mathcal{M}$ and $\mu_\text{lo} = \mu^{n_\text{up} -1} \in \mathcal{M}$\\
	4. Utilize the bisection approach to find $\mu_1$:\\
	\While{$| \mu_\text{\upshape up} - \mu_\text{\upshape lo} | > \epsilon$}{
		4.1. Find $\mu^{c} = \frac{\mu_\text{up} + \mu_\text{lo}}{2}$\\
		4.2. If $g(\mu^\text{c}) > \frac{\bar{P}_\text{\upshape req}}{\bar{P}_\text{\upshape max}}$ set $\mu_\text{up} = \mu^{c}$, else set $\mu_\text{lo} = \mu^{c}$\\
	}
	 5. Set $\mu_1 = \frac{\mu_\text{up} + \mu_\text{lo}}{2}$\\
	 6. Determine the pdf $f_x^\text{ach}(x) = \exp(- {\mu}_0 + {\mu}_1^2 \frac{x^2}{\bar{P}_\text{\upshape max}})$, where $x\in [0, \sqrt{\bar{P}_\text{\upshape max}} ]$ and $\mu_0 \in \mathbb{R}_{+}$ is given in Proposition~\ref{Prop:OptimalSolution} \\
	}
	7. Find cdf $F_x^\text{ach}(x) = \int_{0}^{x} f_x^\text{ach}(\tilde{x}) \text{d}\tilde{x}$ and input pdf $f_s^\text{ach}(s) = \frac{\partial}{\partial s} F_x^\text{ach}(\hat{\psi} (|hs|^2))$\\
	\textbf{Output:} Input pdf $f_s^\text{ach}$.
	\caption{\strut Algorithm for determining the input pdf $f_s^\text{ach}$. }
	\label{Alg:AchievableSolution}
\end{algorithm}

The proposed iterative algorithm for determining the input pdf $f_s^\text{ach}$ as a suboptimal solution of (\ref{Eqn:GeneralOptimizationProblem}) is summarized in \textbf{Algorithm~\ref{Alg:AchievableSolution}}.
We note that for $\bar{P}_\text{\upshape req} \in [0, \frac{1}{3} \bar{P}_\text{\upshape max}]$, the input pdf $f_s^\text{ach}$ can be obtained in closed form.
However, if $\bar{P}_\text{\upshape req} \in (\frac{1}{3} \bar{P}_\text{\upshape max}, \bar{P}_\text{\upshape max})$, the proposed algorithm still requires a numerical solution of (\ref{Eqn:OptimalCoefMu3}).
{Thus, the computational complexity of \textbf{Algorithm~\ref{Alg:AchievableSolution}} does not depend on accuracy $\epsilon$ if $\bar{P}_\text{\upshape req} \in [0, \frac{1}{3} \bar{P}_\text{\upshape max}]$, and is given by  $\mathcal{O}( \log_2{\epsilon}^{-1})$, otherwise \cite{Burden1993}.
Since, for high required harvested powers, the resulting computational complexity of the proposed suboptimal solution may still be too high for practical implementation, in the following, we propose a closed-form suboptimal pdf $f_x^\text{cf}$ as an alternative suboptimal solution of (\ref{Eqn:RefOutputOptimizationProblem}).}
\subsection{Proposed Closed-form Input Distribution}
\label{Section:ProposedDistribution}

We note that when the required average harvested power is low, i.e., $\bar{P}_\text{req} \leq \frac{1}{3} \bar{P}_\text{max}$, the pdf $f^\text{ach}_x \in \bar{\mathcal{F}}_x$ maximizing the achievable information rate in (\ref{Eqn:MI_LowerBound}) is uniform and given by (\ref{Eqn:UniformDistributionX}).
Furthermore, for high required average harvested powers satisfying $\bar{P}_\text{req} \in (\frac{1}{3} \bar{P}_\text{max}, \bar{P}_\text{max})$, the pdf $f^\text{ach}_x (x)$ in (\ref{Eqn:PdfxAchievable}) is a monotonic increasing function of $x \in [0, \sqrt{\bar{P}_\text{max}}]$.
Finally, if $\bar{P}_\text{req} = \bar{P}_\text{max}$, the optimal pdf solving (\ref{Eqn:RefOutputOptimizationProblem}) is trivial and is given by $f_x^\text{opt}(x) = \delta(x - \sqrt{\bar{P}_\text{max}})$.
Motivated by these observations, we propose to employ the following closed-form suboptimal pdf $f_x^\text{cf}$ of $x$, which is uniform, monotonically increasing, and reduces to a Dirac delta function for $\bar{P}_\text{req} \leq \frac{1}{3} \bar{P}_\text{max}$, $\bar{P}_\text{max} < \bar{P}_\text{req} < \frac{1}{3} \bar{P}_\text{max}$, and $\bar{P}_\text{req} = \bar{P}_\text{max}$, respectively:
\begin{equation}
	f_x^\text{cf}(x) = \alpha P_\text{max}^{-\frac{\alpha}{2}} x^{\alpha-1}, \quad x \in [0, \sqrt{\bar{P}_\text{max}}].
	\label{Eqn:ClosedFormPdfX}
\end{equation}

We note that $f_x^\text{cf}$ is the pdf of the scaled Beta distribution (or, equivalently, Kumaraswamy distribution) with parameters $\alpha$ and $1$ \cite{Cover2012}.
It can be shown that for any $\alpha > 0$, $f_x^\text{cf}$ with $\mathcal{D}\{f_x^\text{cf}\} = [0, \sqrt{\bar{P}_\text{max}}]$ is a valid pdf, i.e., we have $\int f_x^\text{cf}(x) \text{d}x = 1$.
Furthermore, the proposed pdf $f^\text{cf}_x \in \bar{\mathcal{F}}_x$, i.e., $f_x^\text{cf}$ satisfies constraint (\ref{Eqn:RefOutputOptConstr1}), if and only if $\alpha \geq \alpha_{\bar{\mathcal{F}}_x} \triangleq \frac{2 \bar{P}_\text{req} }{ \bar{P}_\text{max} - \bar{P}_\text{req}}$.
Moreover, if $\frac{\bar{P}_\text{req}}{\bar{P}_\text{max}} = \frac{1}{3}$, we have $\alpha = \alpha_{\bar{\mathcal{F}}_x} = 1$, and similar to $f_x^\text{ach}$ in (\ref{Eqn:UniformDistributionX}), pdf $f_x^\text{cf}(x) = P_\text{max}^{-\frac{1}{2}}$ is uniform.
Thus, we set parameter $\alpha = \max \{\alpha_{\bar{\mathcal{F}}}, 1\}$.
The proposed pdf $f_x^\text{cf}$ is shown in Fig.~\ref{Fig:PdfsX} for different power ratios $\frac{\bar{P}_\text{req}}{\bar{P}_\text{max}}$.
Finally, we obtain the corresponding transmit signal pdf $f_s^\text{cf}$ in closed form as $f_s^\text{cf}(s) = \frac{\partial}{\partial s} F_x^\text{cf}(\hat{\psi} (|hs|^2))$, where $F_x^\text{cf}(x) = \int_{0}^x f_x^\text{cf}(\tilde{x}) \text{d}\tilde{x} $ is the cdf of $x$.

In the following proposition, we derive the achievable information rate $J_x(\cdot)$ in (\ref{Eqn:MI_LowerBound}) when the pdf of $x$ is given by (\ref{Eqn:ClosedFormPdfX}).
\begin{proposition}
	\label{Prop:ProposedMutualInformation}
	If random variable $x$ follows pdf $f_x^\text{\upshape cf}$ in (\ref{Eqn:ClosedFormPdfX}), the achievable mutual information in (\ref{Eqn:MI_LowerBound}) is given by
	\begin{align}
		J_x(f_x^\text{\upshape cf}) = \frac{1}{2} \ln\Big(1 + \frac{  \bar{P}_\text{\upshape max} e^{2 \frac{\alpha-1}{\alpha}}}{2\pi e \sigma^2 \alpha^2 }\Big).
		\label{Eqn:CF_AchievableRate}
	\end{align}
\end{proposition}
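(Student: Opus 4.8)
The plan is to apply Lemma~\ref{Lemma:EPI} directly. Since the achievable rate $J_x(f_x) = \frac{1}{2}\ln\big(1 + \frac{e^{2 h_x(f_x)}}{2\pi e \sigma^2}\big)$ depends on $f_x$ only through the differential entropy $h_x(f_x) = -\int f_x(x)\ln f_x(x)\,\text{d}x$, it suffices to evaluate $h_x(f_x^\text{cf})$ in closed form and substitute. Thus the entire proof collapses to computing the entropy of the scaled Beta$(\alpha,1)$ density in (\ref{Eqn:ClosedFormPdfX}).

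First I would expand the log-density. Writing $f_x^\text{cf}(x) = \alpha \bar{P}_\text{max}^{-\alpha/2} x^{\alpha-1}$ on $[0,\sqrt{\bar{P}_\text{max}}]$, we have $\ln f_x^\text{cf}(x) = \ln\alpha - \frac{\alpha}{2}\ln\bar{P}_\text{max} + (\alpha-1)\ln x$. Because $f_x^\text{cf}$ integrates to one, the two constant terms pull straight out of the entropy integral, leaving $h_x(f_x^\text{cf}) = -\ln\alpha + \frac{\alpha}{2}\ln\bar{P}_\text{max} - (\alpha-1)\,\mathbb{E}[\ln x]$, so the only nontrivial object is the logarithmic moment $\mathbb{E}[\ln x] = \int_0^{\sqrt{\bar{P}_\text{max}}} f_x^\text{cf}(x)\ln x\,\text{d}x$.

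To evaluate this moment I would integrate by parts with $u=\ln x$ and $\text{d}v = x^{\alpha-1}\,\text{d}x$, which gives $\int_0^{\sqrt{\bar{P}_\text{max}}} x^{\alpha-1}\ln x\,\text{d}x = \frac{\bar{P}_\text{max}^{\alpha/2}}{\alpha}\big(\frac{1}{2}\ln\bar{P}_\text{max} - \frac{1}{\alpha}\big)$. The delicate point here is that the boundary contribution at $x\to 0^+$ vanishes because $x^\alpha\ln x\to 0$ for $\alpha>0$, which is precisely the regime in which $f_x^\text{cf}$ is a valid density; this is the one step that genuinely needs justification rather than bookkeeping. Multiplying by the normalization constant $\alpha\bar{P}_\text{max}^{-\alpha/2}$ then yields $\mathbb{E}[\ln x] = \frac{1}{2}\ln\bar{P}_\text{max} - \frac{1}{\alpha}$.

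Substituting back and combining the $\ln\bar{P}_\text{max}$ terms (their coefficients $\frac{\alpha}{2}$ and $-\frac{\alpha-1}{2}$ add to $\frac{1}{2}$) gives $h_x(f_x^\text{cf}) = \ln\frac{\sqrt{\bar{P}_\text{max}}}{\alpha} + \frac{\alpha-1}{\alpha}$, so that $e^{2 h_x(f_x^\text{cf})} = \frac{\bar{P}_\text{max}}{\alpha^2}e^{2\frac{\alpha-1}{\alpha}}$. Inserting this into the formula from Lemma~\ref{Lemma:EPI} produces exactly (\ref{Eqn:CF_AchievableRate}). The computation is otherwise elementary, so beyond the vanishing boundary term and careful tracking of the $\ln\bar{P}_\text{max}$ coefficients through the substitution I expect no real obstacle.
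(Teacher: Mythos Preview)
Your proposal is correct and follows essentially the same route as the paper: compute the differential entropy of the scaled Beta$(\alpha,1)$ density and plug it into the formula from Lemma~\ref{Lemma:EPI}. The paper's proof differs only cosmetically, substituting $y=x/\sqrt{\bar P_\text{max}}$ so that $f_x^\text{cf}(x)\,\text{d}x=\text{d}(y^\alpha)$ and then integrating $\ln y$ over $[0,1]$, which amounts to the same integration-by-parts you carry out directly; both arrive at $h_x(f_x^\text{cf})=\frac{\alpha-1}{\alpha}-\ln\frac{\alpha}{\sqrt{\bar P_\text{max}}}$.
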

\begin{proof}
	First, we express the entropy of random variable $x$, which follows pdf $f_x^\text{cf}(x)$ in (\ref{Eqn:ClosedFormPdfX}), as follows:
	\begin{align}
		h_x(f_x^\text{cf}) &
		= -\int_0^{\sqrt{\bar{P}_\text{max}}} \alpha \bar{P}_\text{max}^{-\frac{\alpha}{2}} x^{\alpha-1} \ln (\alpha \bar{P}_\text{max}^{-\frac{\alpha}{2}} x^{\alpha-1}) \, \text{d}x \nonumber \\
		&=-\int_0^{1} (\alpha-1) \ln y \,\text{d}y^\alpha  -\int_0^{1} \ln (\alpha \bar{P}_\text{max}^{-\frac{1}{2}} ) \, \text{d}y^\alpha = \frac{\alpha-1}{\alpha} - \ln\frac{\alpha}{\sqrt{\bar{P}_\text{max}}}.
		\label{Eqn:Col2Eqn1}
	\end{align}
	Then, to obtain (\ref{Eqn:CF_AchievableRate}), we substitute (\ref{Eqn:Col2Eqn1}) into the definition of achievable rate $J_x(f_x^\text{cf})$ in Lemma~\ref{Lemma:EPI}.
	This concludes the proof.
\end{proof}

Proposition~\ref{Prop:ProposedMutualInformation} reveals that for $\alpha = 1$, i.e., for $\bar{P}_\text{req} \leq \frac{1}{3} \bar{P}_\text{max}$, the pdf $f_x^\text{cf}$, which is defined in (\ref{Eqn:ClosedFormPdfX}) and shown in Fig.~\ref{Fig:PdfsX}, yields the maximum achievable information rate $J_x(f_x^\text{\upshape cf}) = J^*_{\bar{\mathcal{F}}_{ x }}$ in Proposition~\ref{Prop:UniformDistribution}.
However, if $\bar{P}_\text{req} \in ( \frac{1}{3} \bar{P}_\text{max},  \bar{P}_\text{max})$, the achievable information rate in (\ref{Eqn:CF_AchievableRate}) may be lower than the maximum value in (\ref{Eqn:OptMI}), i.e., we have $J_x(f_x^\text{\upshape cf}) \leq J^*_{\bar{\mathcal{F}}_{ x }}$.
In the next section, we numerically evaluate the performance of THz SWIPT systems and show that all obtained input pdfs, i.e., $f_s^\text{opt}$, $f_s^\text{ach}$, and $f_s^\text{cf}$, yield practically identical mutual information $I_s(\cdot)$ for all considered values of $A$ and $\bar{P}_\text{req}$.


\section{Numerical Results}
\label{Section:NumericalResults}
In this section, we evaluate the performance of the considered THz SWIPT system via numerical simulations.
To this end, we first specify the parameters of the adopted simulation setup and tune the parameters of the proposed parametric EH model to match circuit simulation results.
Next, we evaluate the mutual information and achievable information rate of the THz SWIPT system when the derived input signal distributions are adopted at the TX.
Finally, we study the tradeoff between the mutual information and average harvested power at the RX.
\subsection{Simulation Setup}
\label{Section:SimulationSetup}

{For our numerical simulations, we model the channel between TX and RX as follows \cite{Serghiou2022, Rong2017, Pan2022}:
	\begin{equation}
		h = h_\text{spr} h_\text{abs} h_\text{mis} h_\text{f},
		\label{Eqn:THzChannelModel}
	\end{equation}   
\noindent where $h_\text{spr}, h_\text{abs}, h_\text{mis}, h_\text{f} \in \mathbb{R}$ are the free space spreading, molecular absorption, antenna misalignment, and multipath fading loss coefficients, respectively.
Here, we model the free space spreading loss $h_\text{spr}$ as follows \cite{Zhu2022, Pan2022, Serghiou2022}
\begin{equation}
	h_\text{spr} = \frac{c_l}{4 \pi f_\text{c} d} \sqrt{G_\text{T} G_\text{R}},
	\label{Eqn:SpreadingLoss}
\end{equation}   
\noindent where $c_l$ denotes the speed of light, $f_\text{c} = \SI{300}{\giga\hertz}$ is the carrier frequency\footnote{For our simulations, we utilize $f_\text{c}=\SI{300}{\giga\hertz}$ since the I-V characteristics of the Keysight ADS \cite{ADS2017} RTD design from \cite{Clochiatti2022} are shown to accurately match the measurements results in \cite{Clochiatti2022} for this frequency.}, and $d = \SI{0.1}{\meter}$ is the distance\footnote{For the efficient compact THz antenna designs in \cite{Fan2017}, the adopted distance between the TX and RX exceeds the Fraunhofer distance, and thus, the THz RX operates in the far-field regime \cite{Mayer2023}. We note that the analysis of THz SWIPT systems operating in the near-field is an interesting direction for future work.} between TX and RX.
The spreading loss model in (\ref{Eqn:SpreadingLoss}) has been shown to accurately match experimental results for short-distance sub-THz communication networks \cite{Khalid2019, Papasotiriou2021}.
Furthermore, we set the TX and RX antenna gains to $G_\text{T} = \SI{25}{\dBi}$ and $G_\text{R} = \SI{15}{\dBi}$, respectively \cite{Priebe2011, Fan2017}.
We set $h_\text{mis} = 0.95$ \cite{Boulogeorgos2022, Lee2019} and model the molecular absorption channel coefficient as $h_\text{abs} = e^{-\frac{1}{2} \kappa d}$, where $\kappa = 3\cdot10^{-3}\SI{}{m^{-1}}$ is a molecular absorption coefficient at carrier frequency $f_\text{c} = \SI{300}{\giga\hertz}$ \cite{Sarieddeen2021,  Zhu2022, Boulogeorgos2022}.
Finally, unless specified otherwise, we model the multipath fading coefficient ${h}_\text{f}$ as a Rician distributed random variable with Rician factor $1$ \cite{Serghiou2022, Boulogeorgos2019}.
}

In Algorithm~\ref{Alg:OptimalSolution}, we set the grid size to $K = 1000$, while the accuracy and the lookup table arrays in Algorithm~\ref{Alg:AchievableSolution} are set to $\epsilon = 10^{-3}$ and $\mathcal{M} = \{0, 10^{-1}, 10^{0}, 10^{1}, 10^{2}, 10^{3}, 10^{4}\}$ and $\mathcal{G} = \{\tilde{g} \,\vert\, \tilde{g} = g(\mu), \forall \mu \in \mathcal{M} \}$, respectively. 
The noise variance at the output of the RX is set to $\sigma^2 = \SI{-50}{\dBm}$.

\subsection{RX Circuit Simulations and EH Model Fitting}
\label{Section:ModelValidation}

\begin{table*}[!t]
	\centering
	\footnotesize
	\caption{Tuned parameters of the EH model in (\ref{Eqn:EHmodel}).}
	\hspace*{-10pt}
	\begin{tabular}{|m{0.12\textwidth} | m{0.12\textwidth} || m{0.12\textwidth} | m{0.12\textwidth} || m{0.12\textwidth} |m{0.12\textwidth} | m{0.13\textwidth}|}
		\hline 
		\multicolumn{2}{|c||}{RTD design in \cite{Clochiatti2022}} & \multicolumn{2}{c||}{RTD design with improved $I_\text{rev}$} & \multicolumn{3}{c|}{RTD design with improved $U_\text{br}$}\\
		\hline
		\multicolumn{2}{|c||}{$\rho_1 = \SI{1.8}{\milli\watt}, \rho_2 = \SI{2.4}{\milli\watt}$}& 
		\multicolumn{2}{c||}{$\rho_1 = \SI{2.1}{\milli\watt}, \rho_2= \SI{3}{\milli\watt}$}& \multicolumn{3}{c|}{$\rho_1 = \SI{4.1}{\milli\watt}, \rho_2 = \SI{4.17}{\milli\watt}, \rho_3 = \SI{6.18}{\milli\watt}$}\\
		\hline
		\multicolumn{1}{|c|}{$\varphi_1(\cdot)$} & 
		\multicolumn{1}{c||}{$\varphi_2(\cdot)$} & \multicolumn{1}{c|}{$\varphi_1(\cdot)$} & 
		\multicolumn{1}{c||}{$\varphi_2(\cdot)$} & \multicolumn{1}{c|}{$\varphi_1(\cdot)$} & \multicolumn{1}{c|}{$\varphi_2(\cdot)$} & \multicolumn{1}{c|}{$\varphi_3(\cdot)$} \\
		\hline
		\makecell[l]{$B_1 = \SI{71.6}{\micro\watt}$\\ $\alpha_1 = 1.432$ \\ $\beta_1 = 0.778$ \\ $\theta_1 = 2174.9$} 
		& \makecell[l]{$B_2 = \SI{25}{\micro\watt}$\\ $\alpha_2 = 1.841$ \\ $\beta_2 = 0.445$ \\ $\theta_2 = 956.8$}
		& \makecell[l]{$B_1 = \SI{315}{\micro\watt}$\\ $\alpha_1 = 1.46$ \\ $\beta_1 = 0.527$ \\ $\theta_1 = 3580$} 
		& \makecell[l]{$B_2 = \SI{104}{\micro\watt}$\\ $\alpha_2 = 2.601$ \\ $\beta_2 = 0.703$ \\ $\theta_2 = 1100$}
		& \makecell[l]{$B_1 = \SI{3.6}{\milli\watt}$\\ $\alpha_1 = 1.534$ \\ $\beta_1 = 0.289$ \\ $\theta_1 = 241.6$}
		& \makecell[l]{$B_2 = \SI{535}{\micro\watt}$\\ $\alpha_2 = 3.492$ \\ $\beta_2 = 10^{4}$ \\ $\theta_2 = 1692$}
		& \makecell[l]{$B_3 = \SI{2.85}{\milli\watt}$\\ $\alpha_3 = 1.492$ \\ $\beta_3 = 0.244$ \\ $\theta_3 = 294.8$}\\
		\hline
	\end{tabular}
	\label{Table_EhModel}
\end{table*}
\begin{figure}[!t]
	\centering
	\includegraphics[draft = false, width=0.48\textwidth]{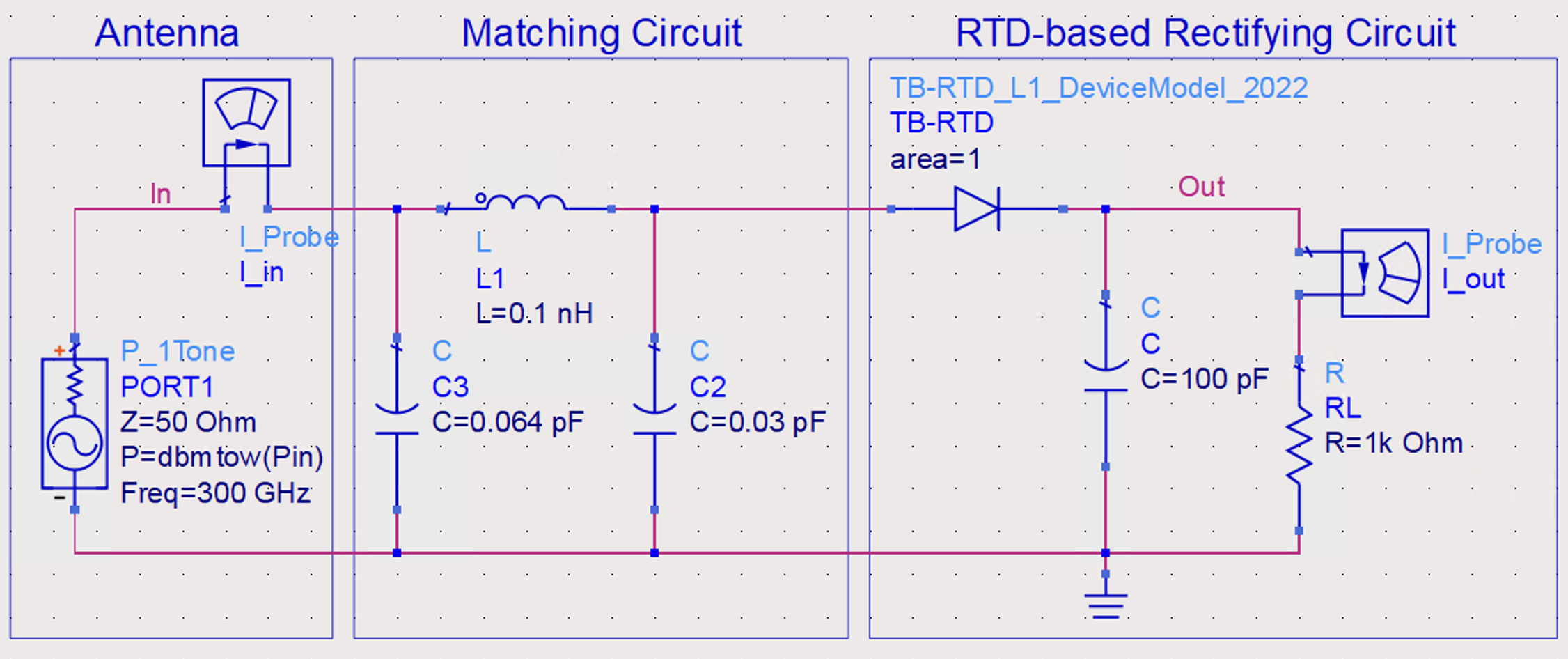}
	\caption{Circuit simulation setup in Keysight ADS \cite{ADS2017}.}
	\label{Fig:ADSModel}\vspace*{-5pt}
\end{figure}

First, we tune the parameters of the proposed general EH model $\psi(\cdot)$ to match the circuit simulation results.
To this end, we model the RTD-based EH circuit shown in Fig.~\ref{Fig:SystemModel} with circuit parameters $C = \SI{100}{\pico\farad}$ and $R_\text{L} = \SI{1}{\kilo\ohm}$ \cite{Morsi2019} using harmonic balance Keysight ADS \cite{ADS2017} circuit simulations, as shown in Fig.~\ref{Fig:ADSModel}.
In particular, we assume an antenna impedance of $\SI{50}{\ohm}$, and for each considered input power $\rho$, we design the matching circuit to achieve near-perfect impedance matching between the antenna and the rectifier of the EH circuit, see Fig.~\ref{Fig:ADSModel} \cite{Morsi2019, Shanin2020}.
To investigate the impact of the breakdown voltage and leakage current of the RTD on the performance of THz SWIPT systems, in our simulations, we consider not only the Keysight ADS \cite{ADS2017} design of the triple-barrier RTD proposed in \cite{Clochiatti2022}, but also two improved RTD designs.
In particular, we adopt RTD designs\footnote{The developed Keysight ADS RTD designs and EH models can be downloaded from https://gitlab.com/NikitaShaninFAU/rtd-designs-for-thz-eh.} with lower reverse leakage current $I_\text{rev}$ and higher breakdown voltage $U_\text{br}$ compared to the design in \cite{Clochiatti2022}, respectively, whose I-V characteristics are shown in Fig.~\ref{Fig:IV_Curve}.

{In Fig.~\ref{Fig:MatchedEhModel}, for all considered RTDs, we show the proposed EH model $\psi(\cdot)$ tuned to match the circuit simulation results, which are obtained using the Keysight ADS circuit simulation tool \cite{ADS2017}.
The respective model parameters are summarized in Table~\ref{Table_EhModel}.
Furthermore, in Fig.~\ref{Fig:MatchedEhModel}, we depict the linear EH model in \cite{Zhang2013} referred to as Baseline EH Model 1.
For this model, the EH conversion efficiency is set to match the ADS circuit simulation results for the RTD design in \cite{Clochiatti2022} assuming $\rho = \rho_\text{max}$, i.e., Baseline EH Model 1 passes through the points $(0, \psi(0))$ and $(\rho_\text{max}, \psi(\rho_\text{max}))$.
Finally, to reveal the importance of an accurate modelling of RTD-based EH circuits in the THz band, we also show the non-linear circuit-based EH model that was proposed in \cite{Morsi2019} to model the non-linear behavior of EH circuits based on Schottky diodes in the GHz band.
We refer to this circuit-based EH model as Baseline EH Model 2 in Fig.~\ref{Fig:MatchedEhModel}.}

{ We observe from Fig.~\ref{Fig:MatchedEhModel} that the proposed EH models precisely match the Keysight ADS \cite{ADS2017} simulation results and accurately capture the non-linear and non-monotonic behavior of the considered EH circuits employing RTDs in the THz band.
In contrast, although the non-linear circuit-based EH model in \cite{Morsi2019} can capture the non-linearity of EH circuits employing Schottky diodes in the GHz band, both baseline schemes are not able to accurately characterize the non-monotonic behavior of RTD-based RXs since the electrical characteristics of Schottky diodes and RTDs are significantly different.}
Next, we note from Fig.~\ref{Fig:MatchedEhModel} that for all considered RTDs, as expected, the instantaneous harvested power grows with the input signal power for $\rho \in [0, \rho_1)$.
For the RTD design in \cite{Clochiatti2022} and the RTD with improved $I_\text{rev}$, the power harvested at the EH receiver decreases with the input power in the corresponding intervals $[\rho_1, \rho_\text{max}]$ indicated in Fig.~\ref{Fig:MatchedEhModel} until the maximum power level $\rho = \rho_\text{max}$ is reached.
Since the maximum power level $\rho_\text{max}$ depends not only on the breakdown voltage $U_\text{br}$, but also on the reverse-bias current flow $I_\text{d}$ for $U_\text{d} \in (-U_\text{br}, 0)$, the values of $\rho_\text{max}$ for both diodes are similar but not identical \cite{Tietze2012}.
Furthermore, since for both RTD designs, the breakdown voltage $U_\text{br}$ is low, functions $\varphi_1(\cdot)$ and $\varphi_2(\cdot)$ are sufficient for EH modelling, and thus, we have $\rho_2 = \rho_\text{max}$.
{On the contrary, we observe from Fig.~\ref{Fig:MatchedEhModel} that for the RTD with higher breakdown voltage $U_\text{br}$, the harvested power at the EH receiver decreases for $\rho \in [\rho_1, \rho_2]$ and increases for $\rho \in [\rho_2, \rho_\text{max}]$.
In other words, compared to the EH circuit equipped with the RTD from \cite{Clochiatti2022}, the corresponding EH circuit can operate at a higher received signal power level exceeding $\SI{5}{\milli\watt}$ and achieves a maximum instantaneous harvested power of around $\SI{1}{\milli\watt}$.}
Therefore, for modelling the corresponding EH circuit, not only functions $\varphi_1(\cdot)$ and $\varphi_2(\cdot)$ are required, but also function $\varphi_3(\cdot)$.
Finally, we observe from Fig.~\ref{Fig:MatchedEhModel} that the lower reverse leakage current and the higher breakdown voltage of the improved diodes yield higher EH efficiencies, and thus, are preferable for SWIPT, when the received power $\rho$ is low and high, respectively.
In the following, we will utilize the obtained EH models to design the input signal distributions and evaluate the performance of THz SWIPT systems.

\begin{figure}[!t]
	\centering
	\includegraphics[draft = false, width=0.49\textwidth]{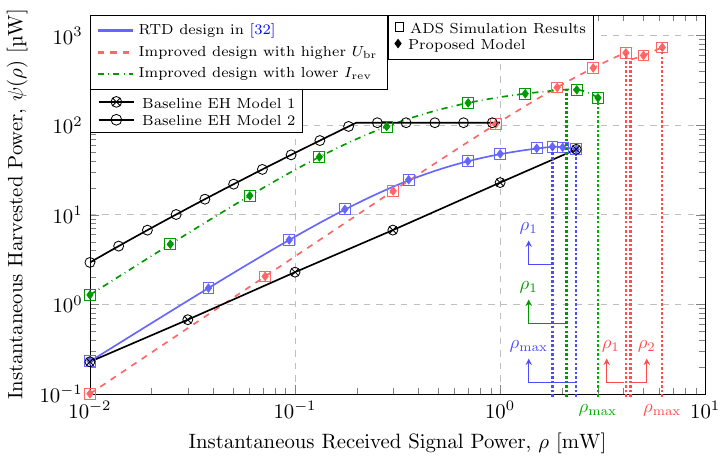}
	\caption{Proposed EH model tuned to match the circuit simulation results.}
	\label{Fig:MatchedEhModel}\vspace*{-5pt}
\end{figure}

\subsection{Input Distribution, Mutual Information, and Achievable Rate}
\label{Section:SimulationsMutInf}
\begin{figure*}[!t]
	\centering
	\subfigure[High peak TX amplitude $A = \SI{2}{\volt}$]{
		\includegraphics[draft=false, width=0.45\textwidth]{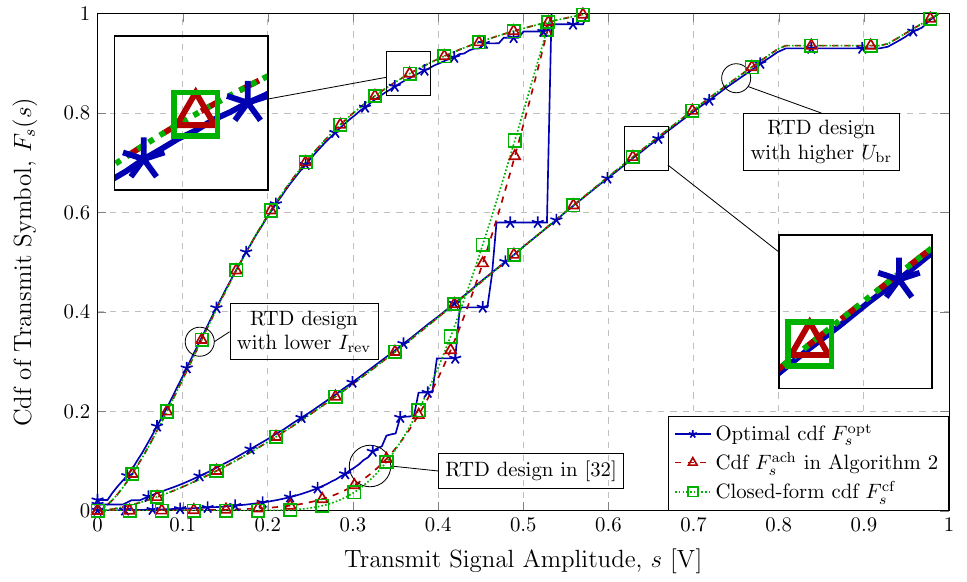} \label{Fig:Results_Distr_HighA}}\hspace*{27pt}
	\subfigure[Low peak TX amplitude $A = \SI{0.45}{\volt}$]{
		\includegraphics[draft=false, width=0.45\textwidth]{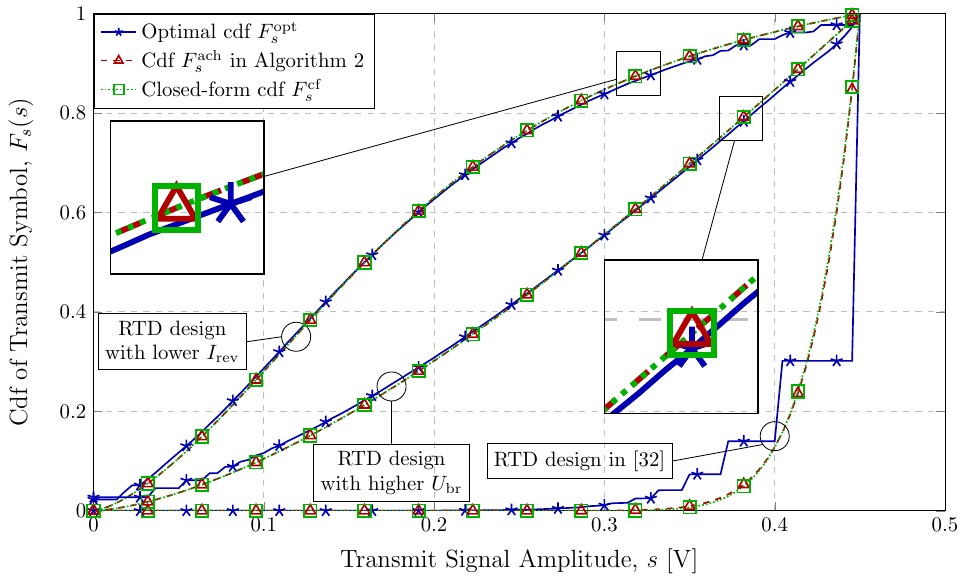} \label{Fig:Results_Distr_LowA}}
	\caption{Cdfs $F_s^\text{opt}, F_s^\text{ach}$, and $F_s^\text{cf}$ obtained for different RTD designs and peak TX amplitudes $A$.}
	\label{Fig:ResultsDistributions}\vspace*{-5pt}
\end{figure*}

In this section, we numerically investigate the performance of THz SWIPT systems when the proposed input distributions are adopted at the TX.
First, for the considered RTD designs and different peak TX amplitudes $A$, we determine and plot the input cdfs $F_s^\text{opt}$ and $F_s^\text{ach}$ obtained with Algorithms~\ref{Alg:OptimalSolution} and \ref{Alg:AchievableSolution}, respectively, in Fig.~\ref{Fig:ResultsDistributions}.
Furthermore, in Fig.~\ref{Fig:ResultsDistributions}, we plot the input cdfs $F_s^\text{cf}$ obtained for the proposed closed-form pdf $f_x^\text{cf}$ in (\ref{Eqn:ClosedFormPdfX}).
{For the results shown in Fig.~\ref{Fig:ResultsDistributions}, we assume a line-of-sight channel between TX and RX without multipath propagation, set the multipath fading coefficient to one, i.e., ${h}_\text{f} = 1$, and adopt a required average harvested power of $\bar{P}_\text{req} = \SI{50}{\micro\watt}$, which is smaller than $\bar{P}_\text{max}$ for all considered RTD designs.}
For the cdfs shown in Figs.~\ref{Fig:Results_Distr_HighA} and \ref{Fig:Results_Distr_LowA}, we adopt high and low peak transmit amplitudes of $A = \SI{2}{\volt}$ and $A = \SI{0.45}{\volt} \leq \frac{\sqrt{\rho_\text{max}}}{|h|}$, respectively.

{First, we observe from Fig.~\ref{Fig:ResultsDistributions} that the suboptimal cdfs $F_s^\text{ach}$, which maximize the achievable rates $J^*_{\bar{\mathcal{F}}_x}$, and the proposed closed-form cdfs $F_s^\text{cf}$ are almost identical for both considered RTD designs and both considered values of $A$.
Next, we observe that in contrast to the suboptimal input distributions corresponding to $F_s^\text{ach}$ and $F_s^\text{cf}$, the optimal input distributions are characterized by step functions $F_s^\text{opt}$, and hence, the optimal transmit signal alphabet is discrete \cite{Varshney2008, Morsi2019}.}
Moreover, we note that if the peak transmit amplitude $A$ is high, the maximum transmit signal amplitude $\bar{A}$ is determined by the maximum harvested power $\rho_\text{max}$, which is lower for the RTD developed in \cite{Clochiatti2022} compared to the improved RTD designs, as shown in Fig.~\ref{Fig:MatchedEhModel}.
Finally, we observe that for the lower peak transmit amplitude $A = \SI{0.45}{\volt}$, significantly higher transmit signal amplitudes are required for the RTD design from \cite{Clochiatti2022} compared to the improved RTDs.
This is due to the lower EH efficiency and the lower maximum harvested power of the RX circuit equipped with the RTD from \cite{Clochiatti2022}, see Fig.~\ref{Fig:MatchedEhModel}.

In Fig.~\ref{Fig:ResultsMutInf}, we plot the average mutual information obtained for different RTD designs and different required average harvested powers $\bar{P}_\text{req}$.
{For the results in Fig.~\ref{Fig:ResultsMutInf}, we assume that the channel coefficient $h$ is constant during the transmission of each codeword and perfectly known at the TX and RX, model the multipath fading coefficient $h_\text{f}$ as a Rician distributed random variable with unit Rician factor, and average all obtained simulation results over 1000 \gls{iid} random realizations of $h_\text{f}$.
In particular, for each random channel realization of $h$, we first determine pdfs $f_s^\text{opt}$, $f_s^\text{ach}$, and $f_s^\text{cf}$ with Algorithm~\ref{Alg:OptimalSolution}, Algorithm~\ref{Alg:AchievableSolution}, and (\ref{Eqn:ClosedFormPdfX}), respectively.
Next, for the obtained pdfs, we calculate the mutual information $I_s(f_s^\text{opt})$, $I_s(f_s^\text{ach})$, and $I_s(f_s^\text{cf})$ and achievable information rates ${J}_x(f_x^\text{ach})$ and ${J}_x(f_x^\text{cf})$.
Finally, we determine the average mutual information and average achievable rates at the RX and plot them in Fig.~\ref{Fig:ResultsMutInf}.}
For the results in Figs.~\ref{Fig:ResultsMutInf_Realistic}, \ref{Fig:ResultsMutInf_ImprovedId}, and \ref{Fig:ResultsMutInf_ImprovedVb}, we consider the cases, where the RX circuit is equipped with the RTD design proposed in \cite{Clochiatti2022}, the RTD design with improved $I_\text{rev}$, and the RTD design with improved $U_\text{br}$, respectively.
Furthermore, for each considered RTD design, we study the performance of a pure wireless information transfer system with $\bar{P}_\text{req} = 0$ and a THz SWIPT system with low and high required average harvested powers of $\bar{P}_\text{req} = 0.4 \hat{P}_\text{max}$ and $\bar{P}_\text{req} = 0.8 \hat{P}_\text{max}$, respectively, where $\hat{P}_\text{max} = \max_{\rho \in [0, \rho_\text{max} ] } \psi(\rho)$ is the maximum harvested power at the RX.

\begin{figure*}[!t]
	\centering
	\subfigure[RTD design from \cite{Clochiatti2022}]{
		\includegraphics[draft=false, width=0.48\textwidth]{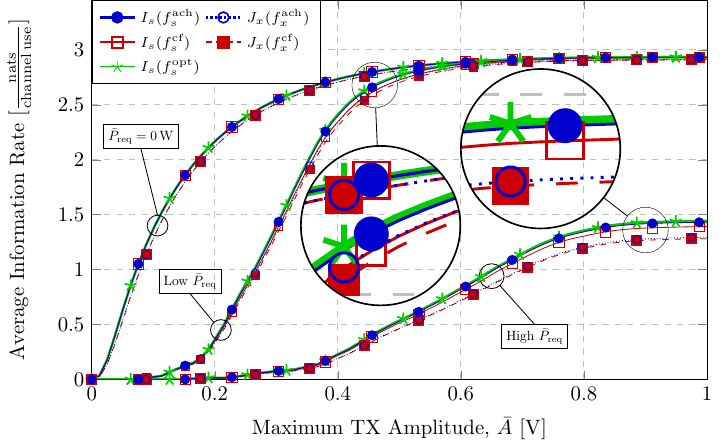} \label{Fig:ResultsMutInf_Realistic}}
	\subfigure[Improved RTD design with lower $I_\text{rev}$]{
		\includegraphics[draft=false, width=0.48\textwidth]{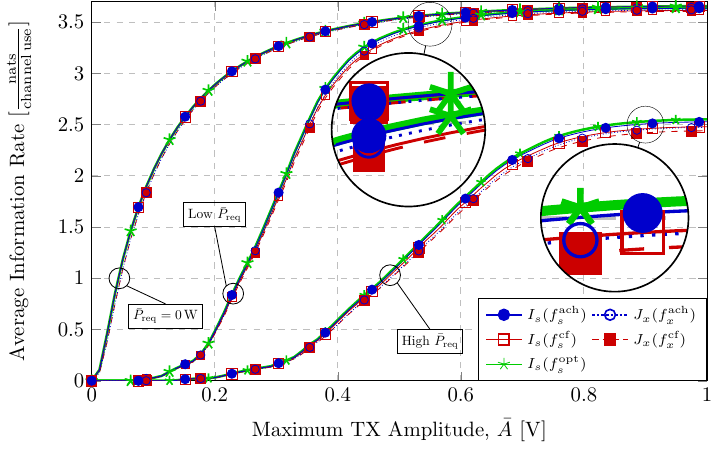} \label{Fig:ResultsMutInf_ImprovedId}}\hspace*{5pt}
	\subfigure[Improved RTD design with higher $U_\text{br}$]{
		\includegraphics[draft=false, width=0.48\textwidth]{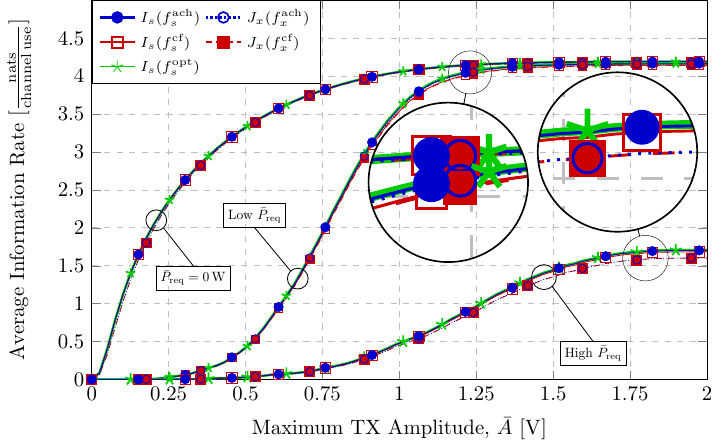} \label{Fig:ResultsMutInf_ImprovedVb}}%
	\caption{Average mutual information and average achievable rate for different input distributions, required average harvested powers $\bar{P}_\text{req}$, and maximum transmit signal amplitudes $\bar{ A }$.}
	\label{Fig:ResultsMutInf}\vspace*{-5pt}
\end{figure*}

First, we observe from Fig.~\ref{Fig:ResultsMutInf} that due to their higher EH efficiencies, the RX circuits equipped with the improved RTD designs significantly outperform the EH circuits employing the RTD design from \cite{Clochiatti2022}.
Next, we note that for pure wireless information transfer with $\bar{P}_\text{req} = 0$, as expected, both proposed pdfs $f_s^\text{ach}$ and $f_s^\text{cf}$ yield the maximum achievable information rate $J^*_{\bar{\mathcal{F}}_x}$.
{ Furthermore, although the proposed pdfs $f_s^\text{ach}$ and $f_s^\text{cf}$ are not optimal, the average maximum mutual information $I_s(f_s^\text{opt})$ is only slightly higher than the average mutual information $I_s(\cdot)$ and average achievable rate $J_x(\cdot)$ obtained with pdfs $f_s^\text{ach}$ and $f_s^\text{cf}$ for all considered values of $\bar{P}_\text{req}$.
Hence, in practical THz SWIPT, one can utilize the low-complexity closed-form input pdf $f_s^\text{cf}$ obtained in Section IV-C without substantial performance degradation.
Moreover, the corresponding achievable information rate can be determined as in Proposition 5.
Since there is no tradeoff between the average achievable rate and the average harvested power for low power ratios $\frac{\bar{P}_\text{req}}{\bar{P}_\text{max}} \leq \frac{1}{3}$, as highlighted in Proposition~\ref{Prop:UniformDistribution}, we observe in Fig.~\ref{Fig:ResultsMutInf} that the average achievable rate and average mutual information obtained for low values of $\bar{P}_\text{req}$ converge to the same value for high maximum TX amplitudes $\bar{A}$.}
Finally, we note that for small TX amplitudes $\bar{A}$ and large required harvested powers $\bar{P}_\text{req}$, i.e., for large power ratios $\frac{\bar{P}_\text{req}}{\bar{P}_\text{max}}$, the values of $I_s(\cdot)$ and $J_x(\cdot)$ depend on both $\bar{A}$ and $\bar{P}_\text{req}$.
Hence, in this case, there is a tradeoff between the average mutual information and the average harvested power that will be investigated in the next section.

\subsection{Optimal and Achievable Rate-Power Tradeoff}
\label{Section:ResultsRatePowerTradeoff}
{In the following, we analyze the tradeoff between the average information rate and the average harvested power in THz SWIPT systems. 
To this end, similar to Fig.~\ref{Fig:ResultsMutInf}, we model $h_\text{f}$ as a Rician distributed random variable with Rician factor $1$ and plot in Fig.~\ref{Fig:Results_Tradeoffs} the average mutual information, average achievable rate, and average harvested power, which are obtained for different values of the maximum amplitude of the transmitted signal $\bar{ A }$ and different RTD designs and averaged over 1000 \gls{iid} random realizations of $h_\text{f}$.
In particular, for different values of the required average harvested power $\bar{P}_\text{req}$, we calculate the average mutual information $I_s(\cdot)$ and average achievable rate $J_x(\cdot)$ by averaging the results obtained for the optimal input pdfs $f_s^\text{opt}$ and the proposed suboptimal pdf $f_s^\text{ach}$ and $f_s^\text{cf}$.}
To confirm the importance of accurate TX signal design for efficient THz SWIPT, for Baseline Scheme 1, we adopt centered Gaussian transmit symbols with pdf $f_s^\text{bs1}(s) = \frac{ \mathcal{N}(\frac{2s - \bar{ A }}{2\sigma_\text{s}}) }{ \sigma_s \text{Erf}(\frac{\bar{ A }}{2\sqrt{2}\sigma_\text{s}}) } $ with support $[0, \bar{ A }]$, where $\mathcal{N}(x) = \frac{1}{\sqrt{2\pi}}\exp(\frac{-x^2}{2})$ is the pdf of the standard Gaussian distribution and $\text{Erf}(\cdot)$ is the error function \cite{Grover2010}.
In particular, for a given value of $\bar{ A }$, the corresponding values of the average mutual information $I_s(f^\text{bs}_s)$ and the average harvested power $\bar{P}_s(f^\text{bs}_s)$ are obtained by adjusting the variance $\sigma_\text{s}^2$ of the truncated Gaussian $f^\text{bs}_s$.
Furthermore, to investigate the impact of accurate EH modelling on the performance of THz SWIPT, for Baseline Scheme 2, we adopt TX signal pdf $f_s^\text{bs2}$, which is obtained by solving optimization problem (\ref{Eqn:GeneralOptimizationProblem}) for a linear EH model, as in \cite{Varshney2008, Zhang2013, Rong2017, Pan2022}.

{First, as in Fig.~\ref{Fig:ResultsMutInf}, we observe in Fig.~\ref{Fig:Results_Tradeoffs} that for all considered maximum transmit signal amplitudes $\bar{ A }$, the improved RTD designs yield higher average information rates than the RTD design from \cite{Clochiatti2022}.
Next, we note that for all considered required average harvested powers $\bar{P}_\text{req}$ and RTD designs, the proposed SWIPT system is able to achieve a significantly higher average mutual information than Baseline Scheme 1, and thus, Gaussian signals are highly suboptimal for THz SWIPT.
Furthermore, we note that for low transmit signal amplitude $\bar{A} = \SI{0.3}{\volt}$, the proposed transmit signal distributions outperform Baseline Scheme 2, which assumes a linear EH model, by a small margin.
However, for large values $\bar{A} \geq \SI{0.75}{\volt}$, the performance gap between the proposed THz SWIPT system and Baseline Scheme 2 is large since the linear EH model cannot capture the non-linearity and non-monotonicity of RTD-based EH circuits.}

As in Fig.~\ref{Fig:ResultsMutInf}, we observe also in Fig.~\ref{Fig:Results_Tradeoffs} that for all ${ A }$ and $\bar{P}_\text{req}$, the gaps between average mutual information $I_s(f_s^\text{opt})$, $I_s(f_s^\text{ach})$, and $I_s(f_s^\text{cf})$ and average achievable rates $J_x(f_x^\text{ach})$ and $J_x(f_x^\text{cf})$ are small.
{Since the average achievable rate $J_x(\cdot)$ and the average mutual information $I_s(\cdot)$ decrease as the required average harvested power increases when $\bar{P}_\text{req} > \frac{1}{3}\bar{P}_\text{max}$, as shown in Fig.~\ref{Fig:ResultsMutInf}, for high required harvested powers, there exists a tradeoff between the average mutual information $I_s(\cdot)$ or average achievable information rate $J_x(\cdot)$ and the average harvested power $\bar{P}_{s}(\cdot)$, respectively, that characterizes the information rate-harvested power regions in Fig.~\ref{Fig:Results_Tradeoffs}.
Thus, in THz SWIPT systems, average mutual information between the TX and RX can be traded for a higher average harvested power at the RTD-based RX.}
Moreover, we note that for low maximum transmit signal amplitudes, which satisfy ${ \bar{A} } < \frac{\sqrt{\rho_1}}{|h|}$, both the average mutual information and the average harvested power grow with ${ \bar{A} }$.
However, if the transmit signal amplitude satisfies $ { \bar{A} }~\geq~ \frac{\sqrt{\rho_1}}{|h|}$, i.e., for ${ \bar{A} } \in \{\SI{0.75}{\volt}, \SI{1}{\volt}\}$ in Fig.~\ref{Fig:Results_Tradeoffs}, the increase of $\bar{A}$ does not yield higher average information rates and average harvested powers for the RTD design from \cite{Clochiatti2022}.
In this case, the tradeoff between the average mutual information and average harvested power of the THz SWIPT system is determined by the maximum instantaneous harvested power $\psi(\rho_1)$ and not by the value of ${ \bar{A} }$.
On the contrary, for the RTD with improved $U_\text{br}$, whose $\rho_1$ and $\rho_\text{max}$ are higher, the average mutual information and average harvested power grow for $\bar{A} > 0.75$.
{Thus, we conclude that an accurate EH model and a corresponding TX signal design are essential for efficient THz SWIPT.}

\begin{figure*}[!t]
	\centering
	\subfigure[RTD design from \cite{Clochiatti2022}]{
		\includegraphics[draft=false, width=0.48\textwidth]{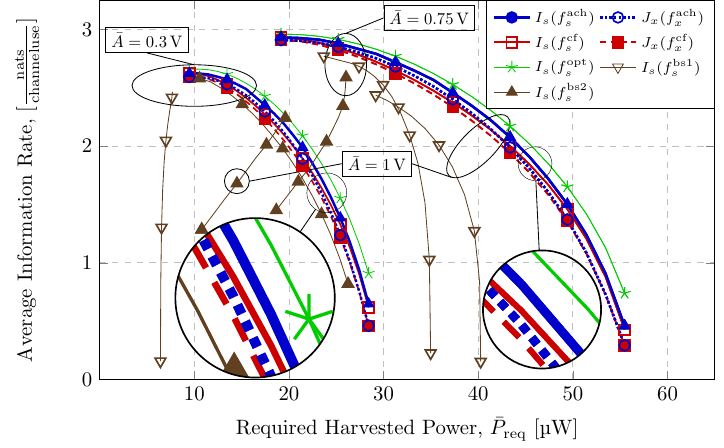} \label{Fig:Results_Tradeoffs_Realistic}}
	\subfigure[Improved RTD design with lower $I_\text{rev}$]{
		\includegraphics[draft=false, width=0.48\textwidth]{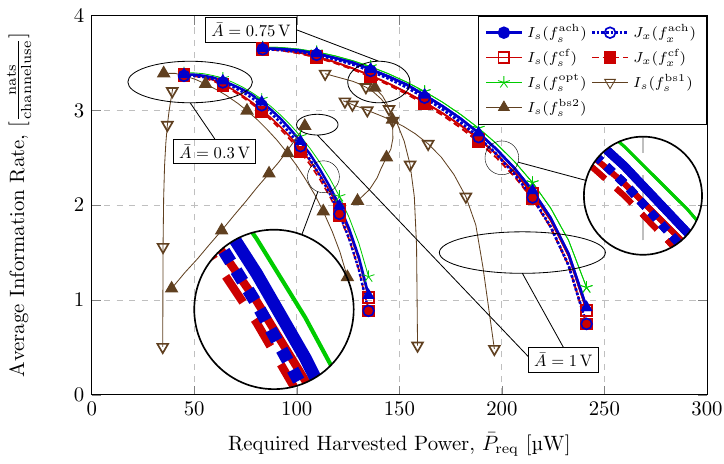} \label{Fig:Results_Tradeoffs_ImprovedId}}\hspace*{5pt}
	\subfigure[Improved RTD design with higher $U_\text{br}$]{
		\includegraphics[draft=false, width=0.48\textwidth]{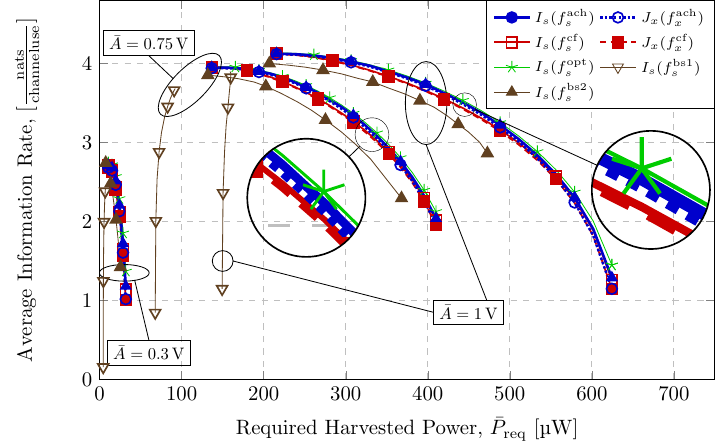} \label{Fig:Results_Tradeoffs_ImprovedVb}}%
	\caption{Optimal and achievable average information rate-average harvested power regions for different values of $\bar{A}$ and different RTD designs.}
	\label{Fig:Results_Tradeoffs}
\end{figure*}

\section{Conclusions}
In this work, we studied THz SWIPT systems, where the RXs were equipped with RTD-based EH circuits to extract both information and power from the received unipolar ASK-modulated signal.
To characterize the instantaneous output power at the RX, we proposed a general non-linear piecewise EH model, whose parameters were tuned to fit circuit simulation results.
Furthermore, we derived the transmit signal pdf that maximizes the mutual information between TX and RX subject to constraints on the average harvested power at the RX and the peak signal amplitude at the TX.
Since the computational complexity of this pdf may be high in practice, we also derived the input pdfs maximizing the achievable information rate.
In addition, we proposed a closed-form input pdf that yields a suboptimal solution of the problem.
{Our numerical results demonstrated that the proposed novel RTD designs with lower reverse current flow and a higher breakdown voltage are preferable if the input signal power at the RX is low and high, respectively.
Furthermore, we showed that for low and high received powers, the information rate-harvested power tradeoff depends on the peak transmit amplitude and the maximum instantaneous harvested power, respectively.
Finally, we observed that the proposed optimal and both low-complexity suboptimal input pdfs yield similar THz SWIPT performance and significantly outperform two baseline schemes based on non-negative Gaussian transmit signals and optimal distribution for a linear EH model, respectively.
Thus, we conclude that the transmit signal distribution, RX design, and EH model have to be carefully jointly considered to achieve efficient THz SWIPT.}

{Interesting directions for future research on THz SWIPT include the optimal design of multi-user multi-antenna systems employing RTDs \cite{Boshkovska2015, Xu2022}, the modelling of THz TXs \cite{Li2022a, Kazan2021}, the design of transmit signal waveforms that are robust to hardware impairments and imperfect channel knowledge \cite{Boshkovska2018, Zhu2021}, and the analysis of THz SWIPT systems in the near field \cite{Shinohara2021, Mayer2023}.}
	
\appendices
	\renewcommand{\thesection}{\Alph{section}}
	\renewcommand{\thesubsection}{\thesection.\arabic{subsection}}
	\renewcommand{\thesectiondis}[2]{\Alph{section}:}
	\renewcommand{\thesubsectiondis}{\thesection.\arabic{subsection}:}	

	
	
	
	
	
	\section{Proof of Proposition \ref{Prop:OptimalSolution}}
	\label{Appendix:OptimalSolution}
	First, since $x$ is a deterministic function of input signal $s$, we obtain $J^*_{\bar{\mathcal{F}}_{x}}$ for given ${\bar{A}}$ and $\bar{P}^\text{\upshape req}_\text{\upshape harv} \geq \frac{1}{3} \bar{P}_\text{\upshape max}$ as follows:
\begin{equation}
	J^*_{\bar{\mathcal{F}}_{x}} = \max_{f_x \in \bar{\mathcal{F}}_x} \; \frac{1}{2} \ln\bigg(1 + \frac{e^{2 {h}_x(f_x)}}{2\pi e \sigma^2 }\bigg).
	\label{Eqn:Prop3Eqn1} 
\end{equation}
Since function $J_x(\cdot)$ is monotonically increasing in ${h}_x(\cdot)$, for $\bar{P}_\text{\upshape req} \in [\frac{1}{3} \bar{P}_\text{\upshape max}, \bar{P}_\text{\upshape max}]$, the pdf $f_x \in \bar{\mathcal{F}}_x$ solving (\ref{Eqn:Prop3Eqn1}) is the maximum entropy distribution, i.e., $f_x$ maximizes entropy ${h}_x(\cdot)$ among all pdfs $f_x \in \bar{\mathcal{F}}_x$ that satisfy $\mathcal{D}\{f_x\} \in [0, \sqrt{\bar{P}_\text{\upshape max}}]$ and $\mathbb{E}_x\{x^2\} \geq \bar{P}_\text{\upshape req}$ \cite{Lapidoth2009}.
Exploiting the Karush–Kuhn–Tucker (KKT) conditions, it can be shown that the optimal pdf as solution of (\ref{Eqn:Prop3Eqn1}) is given by $f^\text{ach}_x(x)$ in Proposition~\ref{Prop:OptimalSolution}.
Moreover, the corresponding entropy of $x$ is given by
\begin{equation}
	h_x(f_x^\text{ach}) = -\int_{x} f_x^\text{ach}(x) \ln(f_x^\text{ach}(x)) \text{d} x = {\mu}_0 - {\mu}_1^2 \frac{\bar{P}_\text{\upshape req}}{\bar{P}_\text{\upshape max}} .
	\label{Eqn:Prop3Eqn3}
\end{equation}
Finally, substituting (\ref{Eqn:Prop3Eqn3}) into (\ref{Eqn:Prop3Eqn1}), we obtain (\ref{Eqn:OptMI}).
This concludes the proof.
	
	

\bibliographystyle{IEEEtran}
\bibliography{WPT_THz.bib}

\end{document}